\renewcommand{\ALG@beginalgorithmic}{\small}
\newcommand\eye{\ensurestackMath{\stackinset{c}{}{c}{-.33pt}%
		{\bullet}{\bigcirc}}}
\newcommand{\removelatexerror}{\let\@latex@error\@gobble}
\newcounter{cases}
\newcounter{subcases}[cases]
\DeclareMathOperator*{\argmin}{argmin}
\DeclareMathOperator*{\pre}{^\bullet}
\newtheorem{theorem}{Theorem}
\newtheorem{lemma}[theorem]{Lemma}
\newtheorem{remark}{Remark}
\newtheorem{definition}{Definition}
\newtheorem{example}{Example}
\def\BibTeX{{\rm B\kern-.05em{\sc i\kern-.025em b}\kern-.08em
		T\kern-.1667em\lower.7ex\hbox{E}\kern-.125emX}}
\begin{document}
	\title{Timed Alignments\\
		\thanks{Neha Rino was funded by the International Master's Scholarships Program IDEX of Université Paris-Saclay.}
	}
	
	\author{\IEEEauthorblockN{Thomas Chatain}
		\IEEEauthorblockA{\textit{LMF, ENS Paris-Saclay,}\\
			\textit{CNRS, Université Paris-Saclay, Inria} \\
			Gif-sur-Yvette, France\\
			\url{thomas.chatain@ens-paris-saclay.fr}\\
			ORCID 0000-0002-1470-5074}
		
		\and
		\IEEEauthorblockN{Neha Rino}
		\IEEEauthorblockA{\textit{LMF, ENS Paris-Saclay,}\\
			\textit{CNRS, Université Paris-Saclay, Inria} \\
			Gif-sur-Yvette, France\\
			\url{neha.rino@ens-paris-saclay.fr}}
	}

	\maketitle
	
	\begin{abstract}
		The subject of this paper is to study conformance checking for timed models, that is, process models that consider both the sequence of events in a process as well as the timestamps at which each event is recorded. 
		Time-aware process mining is a growing subfield of research, and as tools that seek to discover timing related properties in processes develop, so does the need for conformance checking techniques that can tackle time constraints and provide insightful quality measures for time-aware process models. 
		In particular, one of the most useful conformance artefacts is the alignment, that is, finding the minimal changes necessary to correct a new observation to conform to a process model.  
		In this paper, we set our problem of timed alignment and solve two cases each corresponding to a different metric over time processes. For the first, we have an algorithm whose time complexity is linear both in the size of the observed trace and the process model, while for the second we have a quadratic time algorithm for linear process models. 
	\end{abstract}

	\begin{IEEEkeywords}
		Conformance checking, Alignments, Timestamps, Time Petri nets
	\end{IEEEkeywords}
	
	\section{Introduction}

	\subsection{Conformance Checking and Alignments} 
	
	Process mining studies vast systems through their event logs, and seeks to extract meaningful ways to model the underlying patterns or processes that govern the behaviour of the system in order to better understand the system, or predict future behaviour \cite{A16}.
	Once such a process model is obtained, through machine learning or other related techniques, it is natural to ask how one is sure the obtained model is a reasonable approximation of the system’s behaviour at all, especially given the lack of explainability in the blackbox approach ML takes to producing solutions.
	This is where conformance checking comes into the picture, as it is the art of judging the performance of a process model by relating modelled and observed behaviour of a process to each other, without depending on the origin of the model \cite{CDSW18}.
	Observed behaviour comes in the form of traces in an event log, as a sequence of events occurring during the functioning of the system, while process models are blueprints that describe what the underlying processes of any given system are supposed to look like.
	Measures of how well a model reflects system behaviour include fitness, precision, generalisation, and simplicity.
	We often do not want the system to always precisely generate any and all possible future system behaviour, but neither should they simply regurgitate the event log and accept no new system behaviours. 
	What is often much more useful is a process model that can, up to some small error factor, approximate any reasonable future system behaviour. 
	
	In Arya Adriansyah's seminal thesis \cite{Adriansyah2014AligningOA}, we obtain the notion of an alignment, that is, the minimal series of corrections needed to transform an observed event trace into the execution of the process model that most closely mimics it.  
	It is often seen as an execution of the synchronised product of the process model and a trace model i.e. a simple model that captures exactly one event log trace. 
	This gives us a series of edits, usually insertions or deletions, that can transform a process trace into the observed trace.  
	Alignments thereby help pinpoint exactly where inevitable deviations from expected behaviour occur, and the more distant the aligning word of a model is to its observed trace, the worse the model is at reflecting real system behaviour. 
	
	\subsection{Time-aware Processes}
	
	Process models can be represented using a variety of formal objects, including Petri nets. 
	They both offer a graphical means by which to represent concurrent systems, and a formal semantics for their execution, which allows one to mathematically analyse the same. 
	Assuming the event logs are a list of words over a finite alphabet (the set of possible discrete events), the problem of calculating the alignment has been extensively studied \cite{Adriansyah2014AligningOA} \cite{BCC21}. 
	The notion of distance used on these words used is usually either Hamming distance or Levenshtein's edit distance. 
	It is natural to want to study explicitly timed systems
	as by considering events along with their timestamps when mining processes, we can study the minimum delay between two events, or the maximum duration the system takes to converge upon a state, or check deadlines, all of which are highly relevant in real world applications \cite{Cheikhrouhou2014TheTP} \cite{Eder1999TimeCI} \cite{inbook}. 
	In addition, one may want to predict the timestamps of processes \cite{SSA11}. 
	In the process mining community, there are ways to use existing process model notation in order to denote some time constraints. BPMN 2.0 comes equipped with \emph{timer events} and can record absolute, relative, and cyclical time constraints. 
	For our purposes, we use an extension of Petri nets equipped with the ability to express the relative and cyclical time constraints that BPMN equipped with timers can. 
	What we are referring to are time Petri nets, that is, Petri nets augmented with the ability to record and check the duration it takes to fire a transition once enabled, using which one can impose certain constraints on the relationships between the timestamps of different events. 
	
	Time-aware process mining seeks to study both what sort of underlying processes govern system behaviour, and what sort of time constraints they can impose on when certain events can occur \cite{RMAW13} \cite{CRHA20}  \cite{AS21}. 
	The need for such a setting is clear, as specifying and checking concrete time constraints regarding the durations of events can ensure critical information about a system's behaviour. 
	As time-aware process mining grows popular, new quality measures and conformance checking techniques must be developed that are sensitive to temporal constraints, but so far in the study of alignments as a conformance checking artefact, we notice that the process model used is never time-aware. 
	For this, crucially, one needs to define distance functions that can meaningfully compare and separate different time processes. 
	This paper seeks to provide a framework by which to do the very same, and set and solve the alignment problem for time-aware processes.
	
	\subsection{Three Distances, and Algorithms to Align them}
	
	In this paper, we propose three different distance functions over timed words, and study the alignment problem for the first two. 
	The first is essentially Manhattan distance, and hence very standard, and has a quadratic time algorithm in the case of a structurally restricted class of models (linear causal processes).
	For all other types of models, we present an encoding in simplex that solves the problem, albeit with exponential worst-time complexity.
	The second distance function utilises the structure of the process model to study distances between time processes, and we present a straightforward algorithm for solving the alignment in this case, whose time complexity is linear in the size of the causal process and the transition set of the model. The second setting, and the new notion of delay edits that it represents, provide insight into time processes and their relevant constraints. 
	
	\section{Preliminaries}
	We represent events as pairs $(a,t)$ where $a \in \Sigma$ is the name of the action, and $t$ denotes the time at which said action was taken.
	
	\begin{definition}
		A timed trace is a sequence $\gamma \in (\Sigma \times \mathbb{R}^+)^*$ of timed events, seen as a timed word. 
	\end{definition}
	
	We will often ignore the untimed parts of timed words, i.e., the projection onto $\Sigma^*$, leaving just the timestamps, a sequence belonging to $\mathbb{R}^{+*}$. 
	
	The timed process model we use here is a labelled time Petri net. 
	
	\begin{definition}[Labelled Time Petri Net] \label{tpn} A labelled \textit{time Petri net} (or TPN) is a tuple $N = (P, T, F, SI, \Sigma, \lambda, M_0, M_f)$, where $P, T$ are disjoint sets of places and transitions respectively,  $F \subseteq (P \times T) \cup (T \times P)$ is the flow relation,  $SI : T \rightarrow \mathbb{I}$ is the static interval function, where $SI(t) = (Eft(t), Lft(t))$ such that $Eft$ stands for earliest firing time, and $Lft$ for latest firing time, $\lambda : T \rightarrow \Sigma$ is the labelling function, labelling transitions with actions from the action set $\Sigma$, and  $ M_0 , M_f: P \to \mathbb{N}$ are the initial and final markings. 
	\end{definition}
	
	Given a transition $t \in T$ we define the pre-set of $t$ as $\pre t = \{p \in P | (p, t) \in F\}$ and its post-set similarly is defined as $t \pre = \{p \in P | (t, p) \in F\}$, (the presets and post-sets of places are defined similarly). A transition $t$ of a time Petri net is \textit{enabled} at marking $M$ iff $\forall p \in \pre t :  M(p) > 0$. 
	The set of all enabled transitions at a marking $M$ is denoted by \textit{Enabled(M)}. 
	
	A \textit{state} of a TPN $N = (P, T, F, SI, \Sigma, \lambda, M_0, M_f)$ is a pair $S = (M, I)$, where $M$ is a marking of $N$ and $I : Enabled(M) \rightarrow \mathbb{R^+}$ is called the \textit{clock function}. The initial state is $(M_0, \mathbf{0})$, where $\mathbf{0} $ is the zero function. 
	
	A transition $t$ is fireable from state $S = (M, I)$ after delay $\theta \in \mathbb{T}$ iff $t$ is enabled at $M$, and updating the clock function to increment by delay $\theta$ will keep $t$'s new clock value $I(t)  + \theta$ in the range $[Eft(t), Lft(t)]$ as determined by the static interval function. 
	
	Once a fireable transition $t$ is fired, the marking and clock function are both updated to reflect the firing, as defined below : 
	
	\begin{definition}[Firing Rule]
		When a transition $t$ fires after time $\theta$ from state $S = (M, I)$, the new state $S' = (M'', I')$ is given as follows : $$M'' = M' \cup t \pre \textrm{ where } M' = M \setminus \pre t$$ $$I'(t) = \begin{cases}
			I(t) + \theta & \textrm{If } t \in Enabled(M')\\
			0 & \textrm{If } t \in Enabled(M'') \setminus Enabled(M')\\
			\textrm{Undefined} & \textrm{Otherwise}\\
		\end{cases}$$
	\end{definition}
	This is also denoted by $(M, I)[t\rangle(M'', I')$
	
	An important feature of time Petri nets is the notion of urgency, that is, if $t$ is enabled at marking $M$ and has clock value $Lft(t)$, it must fire, or another transition must fire at the same instant disabling $t$.  
	
	A valid execution of the model begins at the initial marking, fires a sequence of transitions (representing a series of activities occurring at certain times) and at the end of the firing sequence reaches $M_f$, with any clock function $I$. 
	
	\begin{definition}[Language of a time Petri net]
		A word $w = (a_0, a_1, \dots a_n) \in \Sigma^*$ is in the language of the labelled time Petri net $\mathcal{L}(N)$ if there is a fireable sequence of transitions $(t_0, t_1 \dots t_n) \in T^*$ such that $\lambda((t_0, t_1, \dots, t_n)) = w $ and they transform the initial marking into the final one, that is, for some clock function $I$ on $M_f$, $$(M_0, \mathbf{0}) [t_0, t_1, \dots t_n\rangle (M_f, I)$$
	\end{definition}
	
	\begin{example}\label{TPN}
		Consider the following example of a time Petri net $N$: \\
		\adjustbox{scale = 0.8}{
			\begin{tikzcd}
				{} & {} & \bigcirc & {\fbox{b}} & \bigcirc && {} \\
				\eye & {\fbox{a}} && {\fbox{c}} && {\fbox{f}} & \bigcirc \\
				{}&\fbox{d}& \bigcirc & {\fbox{e}} & \bigcirc \\
				\arrow[curve={height=-6pt}, from=2-2, to=1-3]
				\arrow[curve={height=6pt}, from=2-2, to=3-3]
				\arrow[from=2-1, to=2-2]
				\arrow[ from=3-3, to=3-2]
				\arrow[from=3-3, to=3-4]
				\arrow[from=3-4, to=3-5]
				\arrow[curve={height=12pt}, from=1-3, to=2-4]
				\arrow[from=1-3, to=1-4]
				\arrow[from=1-4, to=1-5]
				\arrow[curve={height=12pt}, from=2-4, to=1-5]
				\arrow[curve={height=-6pt}, from=1-5, to=2-6]
				\arrow[curve={height=6pt}, from=3-5, to=2-6]
				\arrow[from=2-6, to=2-7]
				\arrow["{[1,1]}"{description, pos=0.1}, draw=white, from=1-4, to=2-4]
				\arrow["{[0,2]}"{description, pos=0.1}, draw=white, from=2-4, to=1-4]
				\arrow["{[1,3]}"{description, pos=0.2}, draw=white, from=3-2, to=3-1]
				\arrow["{[1,4]}"{description, pos=0.3}, draw=white, from=3-4, to=2-4]
				\arrow["{[0, 3]}"{description, pos=0.2}, draw=white, from=2-6, to=1-7]
				\arrow["{[0, \infty)}"{description, pos=0.2}, "\lrcorner"{text=white, anchor=center, pos=0.125, rotate=180}, draw=none, from=2-2, to=1-1]
				\arrow[shift right=2, curve={height=6pt}, from=3-2, to=3-3]
			\end{tikzcd}
		}\\
		
		One possible execution of $N$ would be for the firing sequence $$w = (a, 1)(b, 2)(d, 3)(e, 4)(f,5)$$ The initial marking only has $a$ enabled, and firing $a$ at time $1$ updates the marking by removing the token from $a$'s preplace and filling its two post-places, thereby enabling $b, c, d, e$. Now, note by urgency that $c$ can never fire past time $2$ as then $b$ would have already fired, but in this execution this doesn't matter as $b$ fires at $2$ regardless, filling one preplace of $f$. Then $d$ fires, having been enabled for $2$ units of time, thereby resetting the enabling time of both $d$ and $e$ to zero, so that when $e$ does fire it has only been most recently enabled for $1$ unit of time. Now, finally, $f$ is enabled, and fires after $1$ unit of having been enabled. 
	\end{example}
	
	Now we will want to build some vocabulary to help us talk about individual executions of timed words on time Petri nets. 
	
	\begin{definition}[Causal Net] \label{cn}
		A \textit{causal net} $CN = (B, E, G)$ is a finitary, acyclic net where 
		$$\forall b \in B : |b \pre| \leq 1 \wedge | \pre b| \leq 1.$$
	\end{definition}
	
	This can be viewed as the original Petri net itself, but every time a place is revisited, it is copied afresh to ensure that the execution only ever moves forward. 
	Hence, multiple elements of $B$ or $E$ map to the same element in $P$ or $T$ respectively, a notion that is expressed via the map $p$ defined below. 
	
	\begin{definition}[Homomorphism] \label{hom}  Let $N$ be a time Petri net with place set $P$ and transition set $T$, and ${CN = (B, E, G)}$ be a causal net. 
		A mapping $p : B \cup E \rightarrow P \cup T$ is a \textit{homomorphism} if $p(B) \subseteq P$,  $p(E) \subseteq T$ and $\forall e \in E$ the restriction of $p$ to $\pre e$ is a bijection between $\pre e$ and $\pre p(e)$ and the restriction of $p$ to $e \pre $ is a bijection between $e \pre $ and $p(e) \pre $ and the restriction of $p$ to $Min(CN)$ is a bijection between $Min(CN)$ and $M_0$. 
	\end{definition}
	
	\begin{definition}[Causal Process]\label{cp} A \textit{causal process} of a TPN is a pair $(CN, p)$ where $CN$ is a causal net and $p$ is a homomorphism from $CN$ to $TPN$. 
	\end{definition}
	
	Using $p$, elements of $CN$ are identified with their corresponding \emph{originals} in the time Petri net, and as a result, any causal process corresponds uniquely to an untimed run on the untimed version of a given time Petri net. 
	
	\begin{definition}[Timing Function]\label{tau} A \textit{timing function} $\tau : E \rightarrow \mathbb{R}$ is a function from events of a causal process into time values. 
	\end{definition}
	
	\section{The Alignment Problem in Timed Settings}
	The alignment problem, given a trace from an event log and a process model, involves finding a valid execution of the process model that is closest to the trace under some metric.
	
	\begin{definition}[The General Alignment Problem]
		Given a process model $N$ denoted by a Time Petri Net and a timed trace $\sigma$ we wish to find a timed word $\gamma \in \mathcal{L}(N)$ such that $d(\sigma, \gamma) = \min_{x \in \mathcal{L}(N)}{d(\sigma, x)}$ for some distance function $d$ on timed words. 
	\end{definition}
	
	In the untimed setting, this is viewed as a problem of minimizing cost over a series of edit moves, either insertions (a model move) or deletions (a trace move). When aligning timed words, clearly there are two crucial aspects to the problem, which are, how one deals with deviations in the action labels themselves (the action labels), versus how one deals with deviations in just the timing properties of the word (the timestamps). The alignment problem has been extensively studied for the untimed case \cite{Adriansyah2014AligningOA} \cite{BCC21}, 
	but the timed setting is more complex. 
	
	\begin{example}
		Consider the process model of Example \ref{TPN} and an observed trace $\sigma_1 = (a, 0)(a, 1)(b,2)(d,3)(e,3)(f,5)$. Clearly, there is an extra $a$ event that cannot have occurred as no process trace has more than one $a$, and also, the $e$ event fires too early, it should have waited at least one time unit after the firing of event $d$. 
		
		On the other hand, consider an observed trace of the form $\sigma_2 = (a, 1)(b,1)(d,3)(e,4)(f,5)$. Here, either we can assume the $b$ fired too early, and correct to $(b,2)$, or that was supposed to be a $c$ event as those two are parallel choices. 
		
		Thirdly, consider an observed trace of the form $\sigma_3 = (a, 1)(d,1)(d,2)(e,4)(f,5)$. Clearly one of the $d$'s needs to be converted to a $b$ or a $c$ event, but which $d$? The reasonable choices seem to be either correcting $(d,1)$ to $(c,1)$ or $(d,2)$ to $(b,2)$ or $(c,2)$. 
		Timestamps are attached to the letters they belong to, so in a sense, firing event $b$ at time $2$ as compared to firing event $c$ at time $1$ are not easily comparable, as the time constraints on different events are often different. In this manner, it is not clear how meaningful it would be to define edit moves that transform timestamps that are attached to different tasks. 
	\end{example}

	This is why, we first restrict our attention to the case where the untimed part of $\sigma$ does match the model, that is to say, if $\sigma = (w, \tau)$ where $w$ is the action labels and $\tau$ is the timestamp sequence. then we assume that $w$ has a valid causal process $(CN, p)$ over $N$. Clearly this degenerate case of the problem needs to be solved if the general timed alignment problem is to be solved, and we argue that this case is interesting and complex in its own right. 
	
	This ensures that we can compare the timing constraints on analogous parts of the process, and hence reason meaningfully about deviances from the model that are purely based on the timing of the process. This gives us the following problem : 
	
	\begin{definition}[The Purely Timed Alignment Problem]
		Given a process model $N$ denoted by a Time Petri Net and a timed trace $(w, \sigma)$, and a valid causal process of the untimed part $w$, we wish to find a valid timing function $\gamma$ such that $d(\sigma, \gamma) =\min_{x \in \mathcal{L}(N)}{d(\sigma, x)}$ for some distance function $d$ on timing functions. 
	\end{definition}
	
	Hence, for the first few sections of this paper, we will ignore the untimed parts of the words, i.e we assume the word $\sigma$ we wish to align comes with a valid causal process $(CN, p)$ of the model, and seek to find timing functions on said causal process that are both valid, and minimize distance to $\sigma$ under the metric of choice. 
	
	We shall revisit the general timed alignment problem in section \ref{align}, and show how our methods 
	can be adapted using existing techniques to provide approaches for solving the general 
	problem. 
	
	\subsection{Edit Moves on Timing Functions}\label{dualrep}
	Now we come to the problem of deciding how to compare two timing functions over the same causal process, and quantify how close they are to each other. 
	
	Much like Levenshtein's edit distance, popularly used in the untimed case of the alignment problem, we view the definition of these distances as an exercise in cost minimisation over the set of all transformations between two words. 
	In order to formalise the same, we need to define what the valid moves of such a transformation could be. 
	We define \textit{moves} as functions that map one timing function over $(CN, p)$ to another. 
	What sort of functions are useful notions of transformation on a timed system? 
	
	\begin{example}
		Let us go back to example \ref{TPN}, and study the process model $N$ presented there. 
		
		Now, say we had the following words that did not fit the model, and we wished to analyse how best to modify them to fit them back into the model. 
		
		We start with $(u_1, \sigma_1) = (abdef)(1, 3, 4, 6, 6)$. 
		One candidate for the closest valid execution would be if the timing function $\sigma_1$ were replaced by $\gamma = (1, 2, 4, 6, 6)$, and it feels reasonable to say that the cost for aligning $\sigma_1$ to this $\gamma$ is 1. 
		A way to arrive at this conclusion is by trying to execute $\sigma$'s firing sequence, and noticing that only the guard for $b$ fails. If $b$ were shifted to fire at 2 instead, the whole run would execute without a hitch. This sort of local, almost typographical error can often happen in systems, and it is the simplest kind to fix.  
		
		We start with $(u_1, \sigma_2) = (abdef)(1, 2, 5, 9, 12)$. 
		The closest valid timing function would try to preserve the positions of $a$ and $b$ but the moment it tries to fire $d$, $\sigma_2$ runs late throughout. The closest we can get, intuitively, is to fire every transition in the lower branch as late as possible, giving $\gamma = (1, 2, 4, 8, 11)$. 
		Now, when trying to compare these firing sequences, we can view it just like we did for $\sigma_1$, as $d$, $e$ and $f$ all fire later than they should, each timestamp is moved back once, giving an aligning cost of 3. 
		There is however, another way to see this deviation. 
		This cascading chain of errors can be fixed if $d$ is moved back to fire at 4, and all the relative relationships between $d$ and its successors are preserved. 
		This views the tasks $e$ and $f$ as only caring about when $d$ ended, which makes sense, because they depend on $d$'s completion. 
		This means, the switch from the timestamp series $(5,9,12)$ to $(4, 8, 11)$ can be viewed as only a cost 1 edit
		. 
		This is a slightly more complex error to conceive of, but it reflects the fact that if a delay at the beginning caused the whole process to run late, the important thing to fix is just that initial offset, and the rest of the process will now conform to the model as needed. 
	\end{example}
	
	Based on the above example, we naturally arrive at two types of moves. 
	
	We define a \textit{stamp move} as a move that translates the timing function only at a point, i.e., that edits a particular element of the timestamp series $\tau$. 
	
	\begin{definition}[Stamp Move]
		Given a timing function $\gamma : E \to \mathbb{R}$, formally, we define this as : 
		
		$\forall x \in \mathbb{R}, e \in E : stamp(x, e)(\gamma) =\gamma' $ where
		$$\forall e' \in E : \gamma'(e')  = \begin{cases}
			\gamma(e') + x & e' = e \\
			\gamma(e')& otherwise
		\end{cases}$$
		
	\end{definition}
	
	The next type of move we describe is the more novel and interesting \textit{delay move}. 
	Here, we sought to leverage the structure of the process model itself, by reflecting the causal relationships the pre-order $G$ causes. 
	If an event is $G$-reachable from another, that means the first event is strictly in the causal history of the second so changing the timestamp of such a causal predecessor has consequences for all of its causal descendents, while leaving any causally unrelated (i.e, non-$G$-reachable) events undisturbed. 
	
	In other words, a delay move at $e$ will preserve relative relationships between timestamps in the future, at the cost of shifting the timestamp of every causal descendent of $e$ by the same amount. 
	\begin{definition}[Delay Move]
		Given a timing function $\gamma : E \to \mathbb{R}^n$, we define a delay move applied to it as follows : 
		
		$\forall x \in \mathbb{R}, e \in E : delay(x, e)(\gamma) = \gamma'$ where 
		$$\forall e' \in E : \gamma'(e') = \begin{cases}
			\gamma(e') + x & e' \geq_G e \\
			\gamma(e')& otherwise
		\end{cases}$$
		
	\end{definition}
	
	Now, armed with these types of moves, three natural notions of distance can be constructed. 
	
	\begin{definition}[Stamp Only Distance : $d_t$]
		Given any two timing functions $\tau_1, \tau_2$ over the same causal process $(CN, p)$, we define the stamp-only distance $d_t$ as follows : 
		$$d_t(\tau_1, \tau_2) = \min \{cost(m) | {m \in Stamp^*}, {m(\tau_1) = \tau_2}\}$$
	\end{definition}
	
	\begin{definition}[Delay Only Distance : $d_\theta$]
		Given any two timing functions $\tau_1, \tau_2$ over the same causal process $(CN, p)$, we define the delay-only distance $d_\theta$ as follows : 
		$$d_\theta(\tau_1, \tau_2) = \min \{cost(m) | {m \in Delay^*}, {m(\tau_1) = \tau_2}\}$$
	\end{definition}
	
	\begin{definition}[Mixed Moves Distance : $d_N$]
		And thirdly, given any two timing functions $\tau_1, \tau_2$ over the same causal process $(CN, p)$, we define the mixed move distance $d_N(\tau_1, \tau_2)$ as follows : 
		$$\min \{cost(m) | {m \in (Stamp\cup Delay)^*}, {m(\tau_1) = \tau_2}\}$$
	\end{definition} 
	
	\begin{example}\label{stamptoy}
		Consider the following example : 
		$$\eye \longrightarrow \underset{[0,1]}{\square} \longrightarrow \bigcirc \longrightarrow \underset{[2,2]}{\square} \longrightarrow \bigcirc \longrightarrow \underset{[1,1]}{\square} \longrightarrow \bigcirc$$
		Now for this $N$, let the observed trace ${\sigma = (3, 4, 5) \not \in \mathcal{L}(N)}$. 
		
		The best $d_t$ alignment for the example in the diagram below is $\gamma = (1,3,4)$ with minimum cost $d_t(\sigma, \gamma) = 4$. 
		
		The best $d_t=\theta$ alignment for the example in the diagram below is also $\gamma = (1,3,4)$, but this time with minimum cost $d_\theta(\sigma, \gamma) = 3$, evidenced by the move sequence $(delay(-2, 1)delay(+1, 2))$. 
		
		And lastly, the best $d_N$ alignment for the example in the diagram below is also $\gamma = (1,3,4)$, the sequence of moves being one stamp and one delay move at the start, $m = stamp(-1, 1)delay(-1, 1)$, and now with minimum cost $d_N(\sigma, \gamma) = 2 < \min \{d_t(\sigma, \gamma), d_\theta(\sigma, \gamma)\}$. 
	\end{example}

	\section{Results and Algorithm : Stamp Only Setting}
	\begin{lemma}[Stamp only distance : $d_t$]\label{manhattan}
		$d_t$ as defined above is equivalent to the \textit{Manhattan distance} or \textit{taxicab distance} between two points of $\mathbb{R}^n$, where $|E| = n$, say $\mathbf{t} = (t_1, t_2, \dots t_n) $ and $\mathbf{s} = (s_1, s_2, \dots s_n)$, defined as $$d_t(\mathbf{t}, \mathbf{s}) = \sum_{i = 1}^{n} |t_i - s_i|$$
	\end{lemma}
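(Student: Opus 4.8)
The plan is to prove the two inequalities whose conjunction gives the claimed equality, after first fixing the cost convention that Example~\ref{stamptoy} implicitly uses: a single stamp move $stamp(x,e)$ has cost $|x|$, and the cost of a move sequence $m = m_1 m_2 \cdots m_k$ is the additive $cost(m) = \sum_{j=1}^{k} cost(m_j)$. Enumerating $E = \{e_1, \dots, e_n\}$, I would identify each timing function $\gamma$ with the point $(\gamma(e_1), \dots, \gamma(e_n)) \in \mathbb{R}^n$, so that $\tau_1$ corresponds to $\mathbf{t}$ and $\tau_2$ to $\mathbf{s}$. The structural observation that drives everything is that $stamp(x, e_i)$ alters only the $i$-th coordinate, and does so by the additive increment $x$; thus stamp moves act independently and additively on the $n$ coordinates, and the problem decouples coordinatewise.

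For the upper bound I would exhibit an explicit witnessing sequence: the $n$ moves $stamp(s_i - t_i, e_i)$ for $i = 1, \dots, n$, applied in any order. Since each touches a distinct coordinate, their composition sends $\mathbf{t}$ to $\mathbf{s}$, and its total cost is $\sum_{i=1}^{n} |s_i - t_i|$. Hence $d_t(\mathbf{t}, \mathbf{s}) \le \sum_{i=1}^{n} |t_i - s_i|$, and this direction is essentially immediate once the coordinate-local, additive nature of stamp moves is made explicit.

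For the lower bound I would take an arbitrary sequence $m$ of stamp moves with $m(\tau_1) = \tau_2$, group its moves by the event they act on, and let $x_{i,1}, \dots, x_{i,\ell_i}$ be the shifts of the moves acting on $e_i$. Because these moves affect only the $i$-th coordinate and are additive, the net change in that coordinate is $\sum_{j} x_{i,j}$, and the constraint $m(\tau_1) = \tau_2$ forces $\sum_{j} x_{i,j} = s_i - t_i$ for every $i$. The triangle inequality then gives $\sum_{j} |x_{i,j}| \ge |\sum_{j} x_{i,j}| = |s_i - t_i|$, and summing over $i$ yields $cost(m) = \sum_{i} \sum_{j} |x_{i,j}| \ge \sum_{i} |s_i - t_i|$. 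Minimizing over all admissible $m$ produces $d_t(\mathbf{t}, \mathbf{s}) \ge \sum_{i=1}^{n} |t_i - s_i|$, and the two bounds close the argument.

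I expect the only genuine subtlety to lie in the lower bound: one must argue carefully that the cost of a sequence decomposes as an independent sum over coordinates and that, per coordinate, the triangle inequality makes a single shift of the correct magnitude optimal — that is, that splitting a required shift into several moves, possibly with cancellations, can never beat $|s_i - t_i|$. Everything else is bookkeeping. A minor preliminary point worth recording is that the convention $cost(stamp(x,e)) = |x|$ is precisely the one under which the equivalence holds, which is consistent with the value $d_t(\sigma,\gamma)=4$ computed in Example~\ref{stamptoy}.
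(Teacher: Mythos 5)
Your proposal is correct and takes essentially the same approach as the paper: an explicit one-stamp-per-event witness for the upper bound, and the triangle inequality for the lower bound. The only cosmetic difference is that the paper telescopes the Manhattan metric along the sequence of intermediate timing functions (each stamp move displaces the word by exactly its cost), whereas you group the moves by coordinate and apply the scalar triangle inequality within each coordinate; these are two phrasings of the same argument.
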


	This lemma also allows us to restate the alignment problem for $d_t$, as we see below. 
	
	\subsection{Casting Alignment as a Linear Programming Problem}
	The alignment problem in the stamp-only setting can be viewed as a problem of convex optimization, minimizing the stamp only cost from a fixed observed trace $\sigma$, which by Lemma \ref{manhattan} is $cost(\tau|_i) = \sum_{j = 1}^{i} |\tau_j - \sigma_j|$ over the set of valid timestamp series $\tau$ for the model $N$. 
	What remains is to show that the space of valid runs of the model is a convex set, and this set has been shown to be convex in \cite{BM82}. 
	
	Hence, we claim that using an efficient encoding of the target function using the firing domain formulation described above, the problem can be solved using the simplex algorithm, which is used to solve convex optimization problems in good time in practice, and having worst case exponential time complexity. 
	
	\begin{theorem}\label{convexset}
		Given a bounded time Petri Net $N$, an observed trace $\sigma$, and its causal process $CN = (B, E, G)$, the alignment problem can be viewed as seeking the vector $(\gamma_1, \gamma_2, \dots \gamma_{|E|})$ that minimizes the quantity $\sum_{e \in E} |\gamma_e - \sigma_e|$ over the set $\gamma \in \mathcal{L}(N)$. 
		
		This can be cast as a linear programming problem, i.e, $$\underset{\alpha - \beta \in \mathcal{C}, \alpha, \beta \geq 0}{\mathsf{minimize}} \hspace{0.5em} \alpha + \beta  $$
		
		Where $\mathcal{C}$ is a convex set. 
	\end{theorem}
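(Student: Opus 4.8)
The plan is to combine two ingredients: the identification of the objective with the Manhattan distance from Lemma~\ref{manhattan}, and the classical convexity of the firing domain of a bounded time Petri net established in \cite{BM82}. The absolute-value objective $\sum_{e \in E} |\gamma_e - \sigma_e|$ is not itself linear, so the work is to show that minimizing it over the convex feasible set of valid timing functions is equivalent to a genuine linear program obtained by the standard splitting of each coordinate into its positive and negative parts.

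First I would make the feasible set explicit. Since the untimed run is fixed by the causal process $(CN,p)$, a timing function $\gamma : E \to \mathbb{R}$ is valid exactly when, for every event $e$, the delay between $\gamma_e$ and the enabling time of $e$ lies in the static interval $[Eft(p(e)), Lft(p(e))]$, together with the urgency conditions. By \cite{BM82} the set of timing vectors satisfying all of these constraints is a convex polyhedron; denote it $\mathcal{D} \subseteq \mathbb{R}^{|E|}$. Because $\sigma$ is a fixed vector, the translate $\mathcal{C} = \{\gamma - \sigma : \gamma \in \mathcal{D}\}$ is again a convex polyhedron, and membership $\gamma - \sigma \in \mathcal{C}$ is equivalent to $\gamma \in \mathcal{D}$, i.e.\ to $\gamma$ being a valid element of $\mathcal{L}(N)$ along this causal process.

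Next I would linearize the objective. For each coordinate write $\gamma_e - \sigma_e = \alpha_e - \beta_e$ with $\alpha_e, \beta_e \geq 0$, collect these into vectors $\alpha, \beta$, and replace the objective by $\sum_{e \in E} (\alpha_e + \beta_e)$. The constraint $\alpha - \beta \in \mathcal{C}$ forces the difference vector to describe a valid shifted timing function, while the two summands act as free nonnegative slacks. The correctness of this reformulation is the one routine point that still needs an argument: at any optimum at least one of $\alpha_e, \beta_e$ must vanish, for otherwise subtracting $\min(\alpha_e,\beta_e)$ from both would leave $\alpha - \beta$ — hence feasibility — unchanged while strictly decreasing the objective. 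Therefore $\alpha_e + \beta_e = |\alpha_e - \beta_e| = |\gamma_e - \sigma_e|$ at the optimum, so the linear program and the Manhattan-distance alignment problem share the same optimal value and the same optimal $\gamma$.

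The main obstacle is the feasible set rather than the linearization. One must check that the firing-interval and urgency conditions really cut out a polyhedron defined by affine inequalities, not merely a convex body, since simplex requires linear constraints; the delicate part is the enabling time of an event, which is a maximum over the firing times of the events producing its input conditions and is therefore a priori only piecewise linear. Here I would lean on \cite{BM82}: once the firing order is fixed these maxima resolve into a finite conjunction of difference constraints of the form $\gamma_e - \gamma_{e'} \in [Eft, Lft]$, so that $\mathcal{D}$, and hence $\mathcal{C}$, is genuinely polyhedral and the problem is a bona fide linear program amenable to the simplex method. Boundedness of $N$ then guarantees that the program is feasible and its optimum attained whenever $\mathcal{L}(N)$ is nonempty along the given causal process.
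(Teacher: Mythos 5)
Your proposal is correct and follows essentially the same route as the paper: invoke the General Form Lemma of \cite{BM82} to get a convex (indeed polyhedral) feasible set of timing vectors for the fixed causal process, translate by $\sigma$, and linearize the $\ell_1$ objective by the standard split $\gamma - \sigma = \alpha - \beta$ with $\alpha, \beta \geq 0$. If anything you are more careful than the paper on the one nontrivial point of the linearization --- the paper simply asserts that $\alpha$ and $\beta$ are the positive and negative parts of $\gamma - \sigma$, whereas you correctly observe that this is only forced at the optimum, via the argument that subtracting $\min(\alpha_e,\beta_e)$ from both coordinates preserves feasibility and strictly decreases the objective.
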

	
	\begin{remark}
		By the above theorem we see that the alignment problem for $d_t$ is easily cast as a simplex instance, and can be solved by any linear programming solver. 
		Hence it can be solved in polynomial average running time, but this approach does have theoretically exponential worst time complexity. 
		This bound is not necessarily tight, but either improving it or proving its tightness are left for future work. While the general case seems to be hard to solve efficiently, there is a subclass of process models for which the problem is significantly more tractable, and moreover, has a quadratic time solution. 
	\end{remark}
	
	\subsection{Summing Minimum Cost Graphs} 
	
	We are given a process model $N$, an observed trace $\sigma$, and its underlying linear causal process $CN = (B, E, G)$ with homomorphism $p$ for the untimed run of $\sigma$ on $N$. 
	
	Suppose we tried to express the cost of aligning a prefix of $\sigma$ to some timing function $\tau$ on the truncation of the linear causal process up to the $i$th transition, call it $cost$. 
	$$cost(\tau|_i) = \sum_{j = 1}^{i} |\tau_j - \sigma_j|$$
	
	Now, note the following equation holds for the $(i+1)$th static interval constraint $[a, b]$ : 
	$$\min_{\tau|_{i+1} \in \mathcal{L}^{i+1}(N)}cost(\tau|_{i+1}) = $$$$\min_{\tau_{i+1} - \tau_i\in [a, b]} (|\tau_{i+1} - \sigma_{i+1}| + \min_{\tau|_i \in \mathcal{L}^i(N) }cost(\tau|_i))$$
	
	This suggests that the minimal cost of aligning prefixes $\sigma$ to prefixes of the model can be recursively calculated as a function of the value of the last timestamp in the alignment of a given prefix. 
	$$g_i(d) = \min_{\tau|_i \in \mathcal{L}^i(N) \wedge \tau_i = d} cost(\tau|_i)$$
	
	Now, the previous equation can be expressed in terms of the $g$ functions as follows : 
	$$g_{i+1}(d) = |d - \sigma_{i+1}| + \min_{d-d' \in [a,b]} g_i(d')$$
	
	If we can calculate the family of functions $\forall i \leq n : g_i$ efficiently, then the minimal cost for aligning $\sigma$ to $N$ is simply $\min_{d} g_n(d)$

	\begin{example}\label{stampcomp}
		
		We take a moment to look at an example of the computation of this family of functions we proposed by considering the following $d_t$-alignment problem : 
		
		We first have the process model $N$ : 
		\\
		\adjustbox{scale = 0.7}{
			\begin{tikzpicture}
				
				\node[place, tokens = 1] (p1) at (0,2) {}; 
				
				\node[transition, label = {below : $[0,1]$}] (t1) at (1.5,2) {}; 
				
				\node[place] (p2) at (3,2) {}; 
				
				\node[transition, label ={ below : $[2,2]$}] (t2) at (4.5,2) {}; 
				
				\node[place] (p3) at (6,2) {}; 
				
				\node[transition, label ={ below : $[2,4]$}] (t3) at (7.5,2) {}; 
				
				\node[place] (p4) at (8.5,1) {}; 
				
				\node[transition, label ={ $[0,1]$}] (t4) at (7.5,0) {}; 
				
				\node[place] (p5) at (6, 0) {}; 
				
				\node[transition, label ={ $[0,0]$}] (t5) at (4.5, 0) {}; 
				
				\node[place] (p6) at (3, 0) {}; 
				
				\node[transition, label ={ $[2,4]$}] (t6) at (1.5, 0) {}; 
				
				\node[place] (p7) at (0,0) {}; 
				
				\draw[thick] (p1) edge[post] (t1)
				(t1) edge[post] (p2)
				(p2) edge[post] (t2)
				(t2) edge[post] (p3)
				(p3) edge[post] (t3)
				(t3) edge[post] (p4)
				(p4) edge[post] (t4)
				(t4) edge[post] (p5)
				(p5) edge[post] (t5)
				(t5) edge[post] (p6)
				(p6) edge[post] (t6)
				(t6) edge[post] (p7);
				
			\end{tikzpicture}
		}\\
		The observed trace is $\sigma = (1,2,4,6,6,8)$. 

		We begin to compute the family of functions $g_i$, starting with $g_1(x) = |1-x|$ in the domain $[0,1]$.
		Utilising the recursion, we proceed to compute $g_2$.  
		\begin{figure}[!t]
			\centering 
			\includegraphics[width = 0.5\textwidth]{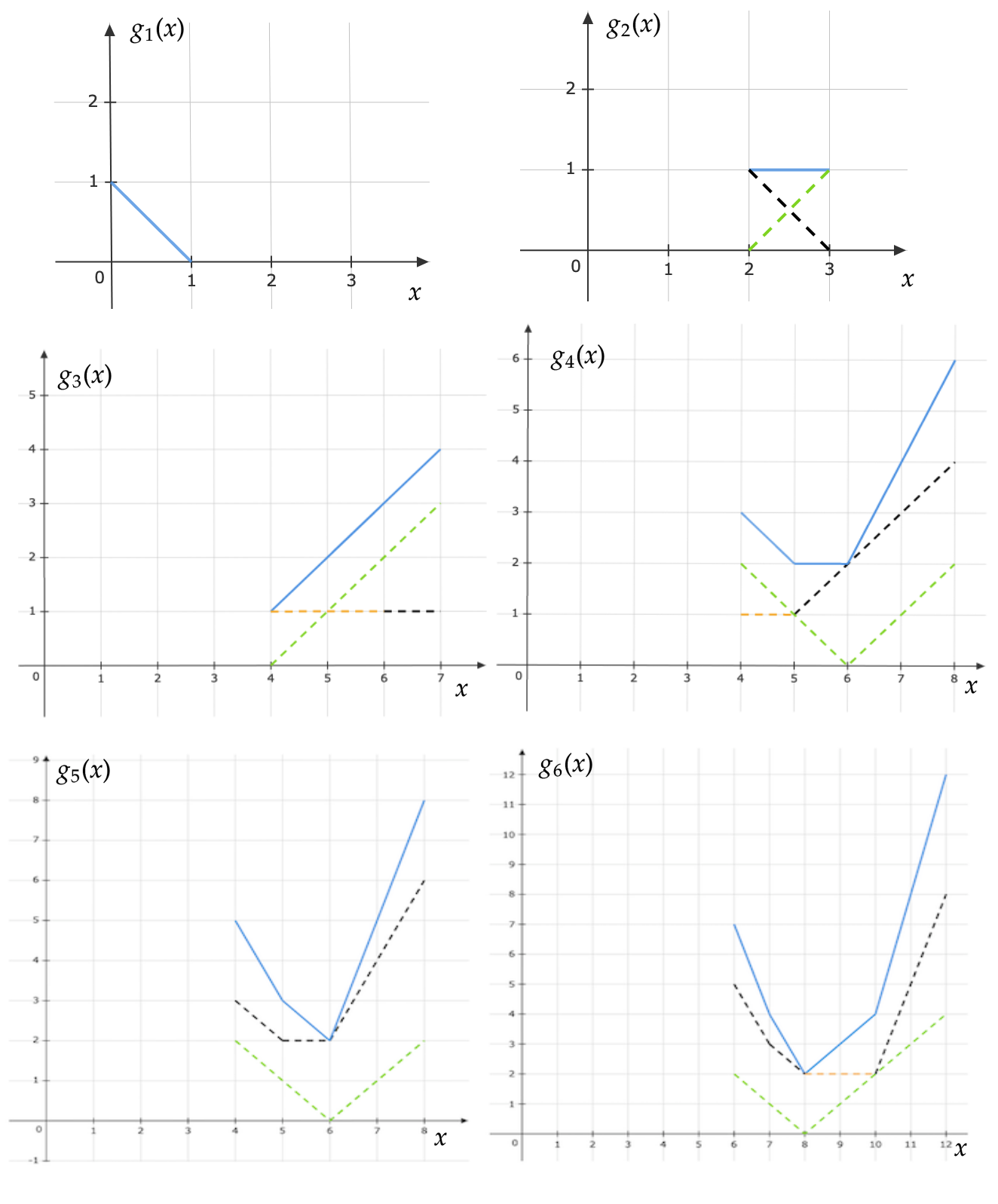}
			\caption{Visual representation of a run of the algorithm on Example \ref{stampcomp}}
		\end{figure}
		The new domain is $[2,3]$, a translation of the previous one because the guard on the second transition is the point $[2,2]$. 
		Now, we want the sum of both $|x-2|$ and $\min_{x' - x \in [2,2]}g_1(x')$. 
		The first of these functions, i.e., the contribution of the last timestamp to the error, is pictured in figure 1 in green dashed lines. 
		The second component is pictured in black dashed lines, and is a translation of $g_1$, as $\min_{x' - x \in [2,2]}g_1(x') = g_1 - 2$. 
		
		We construct $g_3$ similarly, by summing the contribution of the last timestamp, $|x - 4|$, to  $\min_{x' - x \in [2,4]}g_2(x')$. 
		We now find an orange segment in the second component :  This is used to demarcate any additional segments in  $\min_{x' - x \in [a_i, b_i]}g_{i-1}(x')$ not present in $g_{i-1}$ (if any). 
		We see the effect of the $\min$ operation is to attach a segment with the same length as the $i$th static interval constraint, at the lowest point of $g_{i-1}$. 
		Now we begin to see a simple way to construct these graphs visually : The next black graph is a translation of the previous blue one, with an orange line segment or the same length as the next static interval constraint inserted at the lowest point, and the next green graph is $|x-\sigma_i|$ in the new domain, and summing these two gives the next blue graph. 
		Even summation is easier than it looks, as these are all piecewise linear graphs, so one can simply increment the slopes of all the segments of the black-orange curve by one if they're to the right of $\sigma_i$, and by minus one if they're to the left, and translate the height up as needed. 
		Clearly the orange, black and green segments of the next graph are obtained like this, and $g_4$ is their sum. 
		
		The procedure for constructing $g_5$ and $g_6$ are identical, and now finally we see that the minimum cost of $2$ is achieved in $g_6$ by setting the last timestamp to $8$. 
		Now, we know the contribution of the $8$ to the error is zero, so this means we can deduce the corresponding value in $g_5$ must have been $2$ as well, achieved when the fifth timestamp was set to $6$. 
		Now $6$ gives a contribution of $0$ again to the error, letting us look at the preimage of $2$ once again on $g_4$, and so on. 
		We can backtrack in this manner to build an optimal alignment, such as in this case, $(0,2,5,6,6,8)$. 
	\end{example}

	With this context, we present the following algorithm that progressively calculates the graphs of $g_i$ (represented as lists of segments, each characterised by their slopes and x-projections of endpoints) each round, and claim it is both correct and efficient. 
	The algorithm precisely executes the above procedure, where the subfunction \textsc{graphMin} takes $g_i(d)$ and outputs $$\min_{d-d' \in [a,b]} g(d') = \begin{cases}
		g(d-a) & l + a \leq d < m + a \\
		g(m) & m+a \leq d \leq m' + b \\
		g(d+b) & m' + b < d \leq h + b   \\
	\end{cases}$$ Which is exactly the process of taking the last blue graph and creating the next orange-black graph, and subfunction \textsc{graphAddMod} takes $\min_{d-d' \in [a,b]} g(d')$ and adds $|d - \sigma_{i+1}|$ to it, which is precisely adding the green component.  Lastly, the \textsc{backTrack} algorithm backtracks through the list of graphs $\{g_i\}_n$ constructed to reverse engineer a word that aligns with the minimal cost calculated. A more detailed algorithm for backtracking is included in the appendix. 
	
	\begin{figure}[!t]
		\removelatexerror
		\begin{algorithm}[H]
			\caption{$d_t$ Algorithm}
			\label{StampAlgo}
			\begin{algorithmic}
				\State \textbf{Input : }  $\sigma$, a list of static interval constraints for the linear model $SI$
				\State \textbf{Output : }  $cost = \min_{x \in \mathcal{L}(N)}{d_t(x, \sigma)}$, $\gamma \in \mathcal{L}(N)$ such that $d_t(\gamma, \sigma) = cost$
				\State \textbf{Object :} $graph = \{(left, slope, right) | graph[i].left = graph[i-1].right\}$
				\Procedure{StampOnlyAlgo}{$\sigma, SI$}
				\State Initialise System 
				\State $[a, b] \gets SI[i]$
				\State $graph \gets \{(a, 0, b)\}$
				\State \textsc{graphAddMod}($graph, \sigma[0]$)
				\State $graphlist \gets \{graph\}$
				\State $i \gets 1$
				\State $cost \gets |a - \sigma[0]|$
				
				\While{$i < |sigma|$}
				\State $[a, b] \gets SI[i]$
				\State \textsc{graphMin}($graph, a, b$)
				\State \textsc{graphAddMod}($graph, \sigma[i]$)
				\State $graphlist.append(graph)$
				\State $cost \gets cost + |a - \sigma[i]|$
				\EndWhile
				
				\State $i \gets 0$ 
				\State $s \gets graph[0].slope$
				
				\While{$graph[i].slope < 0$}
				\State $s \gets graph[i].slope$ 
				\State $cost \gets cost + s * (graph[i].right - graph[i].left)$
				\State $i \gets i + 1$
				\EndWhile
				
				\State $\gamma \gets \textsc{backTrack}(\sigma, graphlist, cost, SI)$
				\State \textbf{Return : } $cost, \gamma$
				\EndProcedure
			\end{algorithmic}
		\end{algorithm}
	\end{figure}
	
	\begin{theorem}\label{StampAlgoCorr} Algorithm \ref{StampAlgo} is correct, i.e., given a linear causal process of a time Petri Net $N$, its result $\gamma$ has the properties :	
		\begin{enumerate}	\item $\gamma \in \mathcal{L}(N)$
			\item $\forall x \in \mathcal{L}(N)  : d_t(\gamma, \sigma)  \leq  {d_t(x, \sigma)}$
		\end{enumerate}
	\end{theorem}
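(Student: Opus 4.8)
The plan is to split the argument along the two claimed properties --- feasibility ($\gamma\in\mathcal{L}(N)$) and optimality --- both of which I reduce to one structural fact: after its $i$-th iteration the object \texttt{graph} faithfully encodes the function
$$g_i(d) \;=\; \min_{\tau|_i\in\mathcal{L}^i(N),\ \tau_i=d}\ \sum_{j=1}^{i}|\tau_j-\sigma_j|.$$
The load-bearing invariant is that every $g_i$ is a convex, piecewise-linear function on an interval domain with nondecreasing segment slopes: a (possibly one-sided) U-shape with negative slopes, then a minimiser set $[m,m']$ (possibly a single point), then positive slopes. I would state this as a lemma and carry it as a loop invariant, since it is exactly what makes the segment-list representation and the three-case \textsc{graphMin} formula legitimate.

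First I would prove the recursion is correct, independently of the representation, by induction on $i$. The base case $g_1(d)=|d-\sigma_1|$ on $[a_1,b_1]$ holds because, in a linear causal process, $t_1$ is enabled at time $0$ so the only constraint on $\tau_1$ is $\tau_1\in[a_1,b_1]$. For the step, the key is that the cost is additive \emph{and} the feasibility constraint is local: in a linear causal process each $t_{i+1}$ is enabled exactly when $t_i$ fires, so $\tau_{i+1}$ is constrained only through $\tau_{i+1}-\tau_i\in[a_{i+1},b_{i+1}]$. Hence an optimal length-$(i+1)$ prefix ending at $\tau_{i+1}=d$ decomposes as $|d-\sigma_{i+1}|$ plus an optimal length-$i$ prefix ending at some $d'$ with $d-d'\in[a_{i+1},b_{i+1}]$, which is precisely the displayed recursion $g_{i+1}(d)=|d-\sigma_{i+1}|+\min_{d-d'\in[a,b]}g_i(d')$.

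The technical core, and the step I expect to be hardest, is showing \textsc{graphMin} and \textsc{graphAddMod} realise this recursion on the segment list while preserving the U-shape. I would note that $\min_{d-d'\in[a,b]}g_i(d')$ is the infimal convolution of $g_i$ with the indicator that is $0$ on $[a,b]$ and $+\infty$ elsewhere, an operation that preserves convexity; on a U-shaped $g_i$ with minimiser set $[m,m']$ and domain $[l,h]$ it shifts the decreasing branch right by $a$, shifts the increasing branch right by $b$, and widens the flat bottom by $b-a$. That is exactly the three-case formula of \textsc{graphMin} (reading its last branch as $g(d-b)$) and exactly the ``insert an orange segment of length $b-a$ at the lowest point'' picture. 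Then \textsc{graphAddMod} adds the convex map $|d-\sigma_{i+1}|$, which on the slope encoding is the stated $\pm1$ increment on either side of $\sigma_{i+1}$; a sum of convex functions is convex, so the invariant closes. The only delicate bookkeeping is endpoint handling and splitting the segment straddling the kink at $\sigma_{i+1}$, which is routine.

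Finally I would assemble both conclusions. For optimality, convexity of $g_n$ means its minimum occurs where the slope first turns nonnegative, so descending from the left endpoint along the negative-slope segments --- the final \textbf{while} loop --- returns $\min_d g_n(d)$; by the inductive claim this is $\min_{x\in\mathcal{L}(N)}d_t(x,\sigma)$, so no $x$ beats $\gamma$, giving property (2). For feasibility, \textsc{backTrack} begins at a minimiser of $g_n$ and at each step picks a predecessor $d'$ attaining the minimum in the recursion for $g_i$, which by construction satisfies $d-d'\in[a_i,b_i]$; combined with $\gamma_1\in[a_1,b_1]$, this is exactly the characterisation of $\mathcal{L}(N)$ for a linear causal process, so $\gamma\in\mathcal{L}(N)$, giving property (1).
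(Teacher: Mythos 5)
Your proof is correct and follows essentially the same route as the paper's: the same $g_i$ recursion, the same convexity/piecewise-linearity invariant (the paper's Lemma~\ref{minpwc}), and the same reading of \textsc{graphMin}, \textsc{graphAddMod}, and \textsc{backTrack}. The only additions are your explicit justification of the recursion via locality of the firing constraints in a linear causal process (which the paper asserts without argument) and your correct observation that the last branch of the \textsc{graphMin} formula should read $g(d-b)$.
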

	
	\begin{remark} Note that the subfunctions \textsc{graphMin} and  \textsc{graphAddMod} both run through the list of segments of the graph once each, and hence are linear in the sizes of their inputs, and Algorithm \ref{StampAlgo} calls each of these subfunctions once for each letter of the observed trace, each time on a graph with size linear in the current prefix.  
		So the cost calculation section of Algorithm \ref{StampAlgo} has quadratic time complexity in the size of the input. 
		
		The subfunction \textsc{backTrack} runs a binary search on each stored graph in $graphlist$ to find a trace in the language that does indeed give the minimum cost, letter by letter, and so overall takes $O(nlogn)$ time, keeping the overall time complexity  $O(n^2)$ where $n = |\sigma|$.
	\end{remark}
	
	\section{Results and Algorithm : Delays Only Setting}
	With delay edits, which constitute a new way of thinking transformations over time-series, comes a new perspective with which we can view timing functions over causal processes. 
	There is of course the standard definition, $\tau : E \to \mathbb{R}^+$ that assigns to each event a timestamp that records exactly when the event occurs. 
	
	Instead, thinking along the lines of durations between events occurring, we will often benefit from considering the following representation when speaking about delay moves, defined as to view a timed word not in terms of its absolute timestamps, but by the delays between them. 
	
	\begin{definition}[Flow Function]
		Given a causal process $(CN, p)$ over $N$, where $CN = (B, E, G)$, and a (not necessarily valid) timing function $\tau : E \to \mathbb{R}^+$, we first define the \textit{flow} function of $\tau$, $f_{\tau} : E \to \mathbb{R}^+$ such that 
		$$f_{\tau}(e) = \begin{cases} 
			\tau(e) & \pre \pre e = \emptyset\\
			\tau(e) - \tau(e') &e' \in   \pre \pre e,\\ & \tau(e') = \max\limits_{e'' \in \pre \pre e} {\{\tau(e'')\} \cup \{0\}}\\
		\end{cases}
		$$
		
	\end{definition}
	
	\begin{example}\label{wordandflow}
		
		Consider the following process model $N_2$: 
		
		\adjustbox{scale = 0.8}{
			\begin{tikzcd}	
				& {} && {} && {} \\
				\eye & {\fbox{a}} & \bigcirc & {\fbox{c}} & \bigcirc & {\fbox{g}} & \bigcirc \\
				\eye & {\fbox{b}} & \bigcirc & {\fbox{d}} && {} \\
				&& \bigcirc & {\fbox{e}} & \bigcirc & {\fbox{f}} & \bigcirc
				\arrow[from=2-1, to=2-2]
				\arrow[from=2-2, to=2-3]
				\arrow[from=2-3, to=2-4]
				\arrow[from=2-4, to=2-5]
				\arrow[from=3-1, to=3-2]
				\arrow[from=3-2, to=3-3]
				\arrow[from=3-3, to=3-4]
				\arrow[curve={height=6pt}, from=3-4, to=2-5]
				\arrow[from=2-5, to=2-6]
				\arrow[curve={height=6pt}, from=3-2, to=4-3]
				\arrow[from=4-3, to=4-4]
				\arrow[from=4-4, to=4-5]
				\arrow[from=4-5, to=4-6]
				\arrow[from=4-6, to=4-7]
				\arrow[from=2-6, to=2-7]
				\arrow["{[0, 2]}"{description, pos=0.1}, "\lrcorner"{text=white, anchor=center, pos=0.125, rotate=135}, draw=none, from=2-2, to=1-2]
				\arrow["{[3,3]}"{description, pos=0.1}, "\lrcorner"{text=white, anchor=center, pos=0.125, rotate=135}, draw=none, from=2-4, to=1-4]
				\arrow["{[0,1]}"{description, pos=0.1}, "\lrcorner"{text=white, anchor=center, pos=0.125, rotate=135}, draw=none, from=3-2, to=2-2]
				\arrow["{[6,8]}"{description, pos=0.1}, "\lrcorner"{text=white, anchor=center, pos=0.125, rotate=135}, draw=none, from=3-4, to=2-4]
				\arrow["{[6,8]}"{description, pos=0.1}, "\lrcorner"{text=white, anchor=center, pos=0.125, rotate=135}, draw=none, from=4-4, to=3-4]
				\arrow["{[0,1]}"{description, pos=0.1}, "\lrcorner"{text=white, anchor=center, pos=0.125, rotate=135}, draw=none, from=4-6, to=3-6]
				\arrow["{[3,5]}"{description, pos=0.1}, "\lrcorner"{text=white, anchor=center, pos=0.125, rotate=135}, draw=none, from=2-6, to=1-6]
			\end{tikzcd}
		}
		
		For this net and the word $$w = (a, 1)(b, 2)(c, 5)(d, 6)(e, 6)(f, 8)(g,8)$$ we can confuse transition labels with event labels, so we can define its flow function as $$a \xmapsto{}{} 1, b \xmapsto{}{} 2, c \xmapsto{}{} 4, d \xmapsto{}{} 4, e \xmapsto{}{} 4, f \xmapsto{}{} 2, g \xmapsto{}{} 2$$
		
		Where the order of events is clear (such as here, where it is alphabetical, or in linear causal processes)  we can write the flow function as a tuple, as follows : $$f_w = (1, 2, 4, 4, 2, 2)$$
	\end{example}
	
	Note that $e'$ as defined here is the latest causal predecessor of $e$
	. 
	Hence, $f_\tau$ as defined produces exactly the time durations that the guards of each transition in the model checks, i.e. the clock function values during the run. 
	
	As defined, we see that if a word is in the language of the model, then its $f$ function maps events to values that lie within the constraint that the event's corresponding transition demands, that is, it is perfectly aligned with the model with distance 0. 
	This condition is unfortunately not sufficient (due to urgency), but it is quite close to the exact condition necessary for a word to be in the language of a time Petri net. 
	
	Also note that given the underlying causal process and the resulting $f_\tau$, we can reconstruct $\tau$ quite straightforwardly as $$\forall e \in E : \tau(e) = \sum_{e' \leq_G e} f_\tau(e')$$
	
	\begin{lemma}[Delay only distance : $d_\theta$]\label{manhattan2}
		
		$d_\theta$ as defined previously is equivalent to the following formulation directly using the flow function:
		$$d_\theta(\tau_1, \tau_2) = \sum_{e \in E}|f_{\tau_1}(e) - f_{\tau_2}(e)|$$
	\end{lemma}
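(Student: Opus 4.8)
The plan is to show that the flow function furnishes a change of coordinates in which every delay move acts as an independent translation along a single axis; once this is established, $d_\theta$ collapses to the $\ell_1$ (Manhattan) distance between the flow vectors, exactly as Lemma~\ref{manhattan} did for $d_t$ and the raw timestamps. Concretely, I would prove the two inequalities separately: that the right-hand sum is \emph{achievable} by some sequence of delay moves taking $\tau_1$ to $\tau_2$, and that it is a \emph{lower bound} on the cost of every such sequence.

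The heart of the argument is a single computation: I claim that for every $x$ and every $e$, the move $delay(x,e)$ increments $f_\tau(e)$ by exactly $x$ and leaves $f_\tau(e')$ unchanged for all $e' \neq e$. The cleanest way to see this is to use the reconstruction identity $\tau(e) = \sum_{e' \leq_G e} f_\tau(e')$ stated above, which exhibits $\tau$ as the image of $f_\tau$ under the $0/1$ matrix $M$ with $M_{e,e'} = 1$ iff $e' \leq_G e$. Since $delay(x,e)$ adds $x$ to $\tau(e')$ precisely for $e' \geq_G e$, it translates $\tau$ by $x$ times the indicator of the up-set of $e$; that indicator is exactly the column of $M$ indexed by $e$, so in flow coordinates the move is simply $f_\tau \mapsto f_\tau + x\,\delta_e$. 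Equivalently, one can verify the claim directly from the definition of $f_\tau$: none of the immediate predecessors collected in $\pre \pre e$ is $\geq_G e$, so the latest predecessor of $e$ does not move while $e$ itself moves by $x$, giving the $+x$ increment at $e$; and for any other event either it and its latest predecessor both move by $x$ (their difference is preserved) or neither moves.

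Granting this key fact, both inequalities follow quickly. For achievability, I would apply, in any linearisation of $\leq_G$, the moves $delay\big(f_{\tau_2}(e)-f_{\tau_1}(e),\,e\big)$, one per event $e$; since each acts on a distinct flow coordinate, the composite sends $f_{\tau_1}$ to $f_{\tau_2}$, hence (by injectivity of the flow representation) $\tau_1$ to $\tau_2$, at total cost $\sum_{e}|f_{\tau_1}(e)-f_{\tau_2}(e)|$. For the lower bound, I would track each coordinate: along any sequence in $Delay^*$ realising $m(\tau_1)=\tau_2$, the net change of $f(e)$ must equal $f_{\tau_2}(e)-f_{\tau_1}(e)$ and, by the key fact, this change is contributed solely by the moves located at $e$; the triangle inequality then bounds the cost charged to those moves from below by $|f_{\tau_2}(e)-f_{\tau_1}(e)|$, and summing over $e$ yields the claim.

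The step I expect to be the main obstacle is the key computation at \emph{synchronisations}, i.e.\ events with $|\pre \pre e| > 1$. There the flow is defined through a $\max$ over predecessors, and a delay move at an ancestor $e$ may shift some but not all predecessors of a join $e'$ lying below it; if the latest predecessor happens to sit off the shifted branch, the $\max$ can switch and $f(e')$ changes even though the move was not located at $e'$, breaking the clean single-coordinate picture. I would therefore carry out the verification under the structural hypothesis that makes the additive reconstruction identity exact, namely linear (more generally, forest-shaped) causal processes, where each event has at most one predecessor and the sum formula telescopes, and flag the genuine join case as the delicate point, which requires reasoning with the max-based reconstruction rather than the additive one.
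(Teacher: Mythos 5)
Your proof takes the same route as the paper's: the paper's entire argument for this lemma is the one-sentence observation that delay moves on $\tau$ translate exactly to stamp moves on $f_\tau$, after which the Manhattan-distance argument of Lemma~\ref{manhattan} is reused; your two-inequality scheme (achievability by one move per flow coordinate, plus a coordinatewise lower bound via the triangle inequality) is precisely the expanded form of that argument, and it is correct wherever the key fact holds.

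The obstacle you flag at synchronisations, however, is not merely a delicate point requiring a more careful argument with the max-based reconstruction: it is a genuine failure of the key fact, and with it of the stated equality, whenever the causal process has a join. Concretely, take events $a,b$ both immediate predecessors of a join $c$, so $\pre\pre c = \{a,b\}$, and write timing functions as triples $(\tau(a),\tau(b),\tau(c))$. Let $\tau_1 = (0,5,6)$ and $\tau_2 = delay(1,a)(\tau_1) = (1,5,7)$. Then $d_\theta(\tau_1,\tau_2) = 1$, but $f_{\tau_1} = (0,5,1)$ while $f_{\tau_2} = (1,5,2)$, so $\sum_{e}|f_{\tau_1}(e)-f_{\tau_2}(e)| = 2$. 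The mechanism is exactly the one you isolate, except that the $\max$ need not even switch branches: $delay(1,a)$ shifts $\tau(a)$ and $\tau(c)$ but not $\tau(b)$, and since $b$ rather than $a$ realises $\max_{e''\in\pre\pre c}\tau(e'')$ both before and after the move, the flow coordinate at $c$ changes although the move was located at $a$. The discrepancy can also go the other way: with $\tau_1=(0,5,6)$ and $\tau_2=(7,5,8)$ one gets $d_\theta(\tau_1,\tau_2)=12$ (cost $7$ is forced at $a$, dragging $c$ up by $7$ and requiring a further cost-$5$ correction at $c$) against a flow distance of $7$. Since the definition of $f_\tau$ explicitly provides for $|\pre\pre e|>1$, joins are clearly within the lemma's intended scope, so this is a gap in the statement and in the paper's proof, not only in yours. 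Your instinct to establish the result only for join-free (linear or forest-shaped) causal processes, where each event has at most one immediate predecessor and the additive reconstruction is exact, is the right one; in that regime your argument is complete and is in fact more careful than the paper's.
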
	
	
	The flow function is hence a dual representation of timing functions, much like line graphs for graphs, as if a timing function traditionally labels its transition nodes with timestamps, the flow function labels edges leading up to transitions with the duration since said transition was enabled. This gives us a new way to formulate the alignment problem in terms of minimising distance from the flow vector of $\sigma$. 
	
	\begin{example}
		Let us study the process model $N_2$ from example \ref{wordandflow} once again : 
		Say we want to align  $$w = (a, 1)(b, 2)(c, 5)(d, 6)(e, 6)(f, 8)(g,8)$$ to $N_1$. We recall that we calculated its flow function to be $$a \xmapsto{}{} 1, b \xmapsto{}{} 2, c \xmapsto{}{} 4, d \xmapsto{}{} 4, e \xmapsto{}{} 4, f \xmapsto{}{} 2, g \xmapsto{}{} 2$$
		As membership in the language only depends on the flow values, we try to pick flow components that are as close to $w$'s as possible while keeping the alignment in the language, i.e. $\underset{u \in \mathcal{L}(N)}{minimize} {\sum_{e \in E} |f_w(e) - f_u(e)|}$. Choosing naively to optimise each summand gives us the flow function $$a \xmapsto{}{} 1, b \xmapsto{}{} 1, c \xmapsto{}{} 3, d \xmapsto{}{} 6, e \xmapsto{}{} 8, f \xmapsto{}{} 1, g \xmapsto{}{} 3$$ From which we can reconstruct the closest word $$u = (a, 1)(b, 1)(c, 4)(d, 7)(e, 7)(f, 8)(g, 10)$$
		
		Clearly if one tries to move any component of $u$'s flow function any closer to that of $w$, the word will no longer be in the language. 
		
	\end{example}
	We could reason this way because $d_\theta$ allows local edits that do not affect global membership in the language of the time Petri net, as both the time Petri net and delay edits concern themselves only with the duration of time elapsed between a transition's enabling and its firing. 
	This relationship is especially clear in extended free choice time Petri nets, where the only events that determine the fireability of an event are its causal history and events with the exact same preset which would naturally disable it by firing. 
	Hence, we present Algorithm \ref{DelayAlgo}, which minimizes the delay cost at each transition locally, while still overall ensuring membership in the language of the model. The main step in the algorithm is line 12, the rest is just to ensure the time process we obtain is a valid one.

	\begin{theorem}\label{DelayAlgoCorr} Algorithm \ref{DelayAlgo} is correct on extended free choice time Petri nets, that is, its result $\gamma$ has the properties : 
		\begin{enumerate}
			\item $(CN, p, \gamma)$ is a valid time process of $N$. 
			\item $d_\theta(\gamma, \sigma)  \leq  {d_\theta(x, \sigma)}$ for all $x$ such that $(CN, p, x)$ is a valid time process of $N$. 
		\end{enumerate}
	\end{theorem}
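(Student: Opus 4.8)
The plan is to reduce the claim to a separable, coordinate-wise optimisation by working entirely in the flow representation. By Lemma~\ref{manhattan2}, $d_\theta(x,\sigma)=\sum_{e\in E}|f_x(e)-f_\sigma(e)|$, so the objective already decomposes over the events of $CN$; the crux is to show that, for extended free choice nets, the \emph{feasibility} region decomposes as well. The central claim I would establish first is the characterisation that a timing function $\tau$ yields a valid time process $(CN,p,\tau)$ if and only if for every event $e$,
$$f_\tau(e)\in\bigl[\,Eft(p(e)),\;U(e)\,\bigr],\qquad U(e)=\min\{\,Lft(t')\mid t'\in T,\ \pre t'=\pre p(e)\,\}.$$
Intuitively each clock value must respect its own earliest firing time, while its effective latest firing time is lowered to the smallest $Lft$ among all transitions sharing the preset of $p(e)$: by urgency, whichever of these competitors reaches its deadline first would fire and disable $e$. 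This bound $U(e)$ is exactly what the algorithm computes before line~12, and it is the only channel through which urgency enters.

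To prove the characterisation I would argue both directions. For necessity, $f_\tau(e)\ge Eft(p(e))$ is immediate from the firing rule; for the upper bound, observe that all input conditions of $e$ are filled at the common enabling instant $\tau(e')$, where $e'$ is the latest causal predecessor (the maximiser of $\tau$ over $\pre \pre e$). By extended free choice every transition sharing the preset of $p(e)$ becomes enabled at this same instant, so if $e$ fired later than $\tau(e')+U(e)$ some competitor would have reached its $Lft$ and fired first, disabling $e$ and contradicting the fixed causal process. For sufficiency I would reconstruct the absolute schedule from the flow via $\tau(e)=\sum_{e''\le_G e}f_\tau(e'')$ and check, marking by marking, that every transition fires inside its guard and that no enabled transition is ever carried past a deadline. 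Here the free choice hypothesis does the essential work: since any two transitions either share their whole preset or are input-disjoint, the only urgency interactions with $e$ come from its exact competitors, all absorbed into $U(e)$; concurrent, input-disjoint branches advance global time independently and add no coupling. This collapses the feasibility region to the product $\prod_{e\in E}[\,Eft(p(e)),U(e)\,]$.

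Granting the characterisation, both parts of the theorem are short. For claim~1, the algorithm sets $f_\gamma(e)$ to the clamp of $f_\sigma(e)$ into $[Eft(p(e)),U(e)]$, so every flow value lands in its interval and $(CN,p,\gamma)$ is valid; the remaining lines only recover $\tau$ from the flow via the reconstruction formula. For claim~2, let $x$ be any valid time process. The characterisation gives $f_x(e)\in[Eft(p(e)),U(e)]$ for each $e$, and since $f_\gamma(e)$ is by construction the nearest point of that interval to $f_\sigma(e)$, we have $|f_\gamma(e)-f_\sigma(e)|\le|f_x(e)-f_\sigma(e)|$ coordinate by coordinate. Summing over $E$ and applying Lemma~\ref{manhattan2} yields $d_\theta(\gamma,\sigma)\le d_\theta(x,\sigma)$.

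The hard part will be the sufficiency half of the characterisation, namely turning an arbitrary admissible flow vector back into a genuine firing sequence that survives urgency. One must verify that scheduling each event at $\tau(e)=\sum_{e''\le_G e}f_\tau(e'')$ never forces a deadline violation on a concurrent branch, and this is exactly where extended free choice is indispensable: it localises every urgency interaction to a set of transitions with a common preset, so the cross-branch coupling that for general time Petri nets makes the flow condition merely necessary (the ``not sufficient due to urgency'' caveat) simply vanishes. I would also flag the degenerate case $U(e)<Eft(p(e))$, in which the interval is empty and no valid timing of the given untimed run exists; I would either treat timing-realisability of the supplied causal process as a standing precondition or have the algorithm detect the empty interval and report infeasibility.
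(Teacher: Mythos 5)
Your optimality argument (claim~2) is essentially the paper's: every valid timing function has $f_x(e)\in[Eft(p(e)),U(e)]$ by necessity, the algorithm outputs the coordinate-wise clamp of $f_\sigma$ into these intervals, and Lemma~\ref{manhattan2} lets you sum. The genuine gap is in your central characterisation, the claim that $(CN,p,\tau)$ is valid \emph{iff} $f_\tau(e)\in[Eft(p(e)),U(e)]$ for every $e$, with the feasible region collapsing to a product of intervals. The paper explicitly warns that this condition ``is unfortunately not sufficient (due to urgency)'', and the characterisation it actually invokes (Theorem~\ref{thm35}, due to Aura and Lilius, which holds \emph{for} extended free choice nets) has a third clause your decomposition omits: completeness, i.e.\ for every \emph{extension} transition $t$ (one enabled at $Cut(E)$ but never fired in the process), $\max\{\tau(e)\mid e\in E\}\le TOE(Cut(E),t)+Lft(t)$. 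Your treatment of urgency via $U(e)$ only absorbs competitors of transitions that actually fire; it says nothing about an enabled transition that is neither fired nor disabled by the causal process, yet whose deadline still forces every event of the process to occur before it. For instance, with initial marking $\{p_1,p_2\}$, a transition $t_1$ with preset $\{p_1\}$ and interval $[0,10]$, a disjoint transition $t_2$ with preset $\{p_2\}$ and interval $[0,0]$, and a causal process firing only $t_1$, your intervals permit $\tau(t_1)=10$ while urgency on $t_2$ forces $\tau(t_1)=0$. So the ``if'' direction of your characterisation fails, and with it claim~1: the clamped $\gamma$ need not be a valid time process.

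This also quietly undermines claim~2 as you derive it, since optimality of the clamp over the product set only transfers to optimality over the true feasible set if the clamp actually lies in the true feasible set. The paper discharges the completeness clause by a separate argument about the algorithm's $Soon$ set, which only ever assigns a timestamp to an event whose deadline $l+toe$ is minimal among all currently enabled transitions, so that no enabled transition is carried past its deadline. You need an analogue of that step (or an explicit detection of the case where the clamped flow violates completeness), not just the interval clamp; your proposed handling of the degenerate case $U(e)<Eft(p(e))$ is in the right spirit but does not cover this cross-branch constraint.
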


	\begin{remark} Note that Algorithm \ref{DelayAlgo} runs through the events of the causal process $E$ exactly once, and each time runs through the set of currently enabled transitions, and the total set of transitions, and hence its time complexity is linear in the size of the input $|CN|$, and the size of the transition set, i.e, $O(|E||T|)$. 
	\end{remark}
	\begin{figure}[!t]
		\removelatexerror
		\begin{algorithm}[H]
			\caption{Local $d_\theta$ Algorithm}
			\label{DelayAlgo}
			\begin{algorithmic}[1]
				\State \textbf{Input : }  $N$, $CN = (B, E, G), p$, $\sigma : E \to \mathbb{R}^+$
				\State \textbf{Output : }  $\gamma : E \to \mathbb{R}^+$ such that $\gamma$ is a valid timing function and $d_\theta(\gamma, \sigma)  \leq  {d_\theta(x, \sigma)}$ for all valid timing functions $x$ 
				
				\State $Curr = Min(CN)$
				\State $Enabled = \{t | \pre t \subseteq p(Curr)\}$
				\State $FD = \{(t, Eft(t), l, 0) | {t \in Enabled}, {l = \min_{\pre t' = \pre t}{Lft(t')}}\}$
				\State $Soon = \{{(t, eft, l, toe) \in FD} | {l+toe} \textrm{ is minimal in } FD\}$
				\While{ $E \neq \emptyset$} 
				\State Pick $(t, eft, l, toe) \in Soon, {t \in p(E)}$ 
				\State $E \gets E \setminus\{e | p(e) = t\}$
				\State $e \gets p^{-1}(t)$
				\State $Curr \gets \{Curr \setminus \{\pre e\}\} \cup \{e \pre\} $
				\State $f_{\gamma}(e) = \argmin_{x \in [eft, l]}|x - f_{\sigma}(e)|$
				\ForAll{$t' \in T \wedge \pre t' = \pre t$} 
				\State $Enabled \gets Enabled \setminus\{t'\}$
				\State $FD \gets FD \setminus\{(t', e', l', toe')\}$
				\EndFor
				\ForAll{$t' \in T \wedge t' \pre = t \pre$}
				\State $Enabled \gets Enabled \cup \{t'\}$
				\State $eft' \gets Eft(t')$
				\State $l' \gets \min_{\pre t'' = \pre t'}{Lft(t'')}$
				\State $toe' \gets toe + f_\gamma(e)$
				\State $FD \gets FD \cup \{(t', eft', l', toe')\}$
				\EndFor
				\State $Soon = \{{(t, eft, l, toe) \in FD} | {l+toe} \textrm{ is minimal in } FD \}$
				\EndWhile
				\State \textbf{return $\gamma$}
			\end{algorithmic}
		\end{algorithm}
		
	\end{figure}
	\section{The General Timed Alignment Problem} \label{align}
	
	Now that we have provided methods to tackle the purely timed alignment problem for two metrics, we return to the general timed alignment problem. 
	One approach for solving the general timed alignment problem would be to begin by completely ignoring the timed aspect, and aligning only the untimed part of the word $w$ and the process model $N$ (now reduced to the underlying untimed Petri net) using the $\mathbb{A}^*$ algorithm\cite{Adriansyah2014AligningOA}. This will produce 
	a word $w'$ with a valid causal process. This new untimed word $w'$ can then be given a timing function based on that of $w$, 
	repeating or deleting timestamps for any insertion or deletion moves. This yields a valid causal process and a potentially invalid timing function over it, which is then cast as a purely timed alignment problem. 
	
	This approach is quite naive, as it may have a tendency to exaggerate the importance of action label errors. For example, consider the following scenario : 
	
	\begin{example}
		Consider the following net $N_3$ where the final marking is presumed to be the sink place: 
		\adjustbox{scale=0.8}{
			\begin{tikzcd}
				& {} && {} && {} \\
				\eye_I & {\fbox{a}} & \bigcirc & {\fbox{a}} & \bigcirc & {\fbox{a}} &\bigcirc_f \\
				& {\fbox{b}} & \bigcirc & {\fbox{a}} & \bigcirc & {\fbox{a}} \\
				\arrow[from=2-1, to=2-2]
				\arrow[curve={height=12pt}, from=2-1, to=3-2]
				\arrow[from=2-2, to=2-3]
				\arrow[from=2-3, to=2-4]
				\arrow[from=2-4, to=2-5]
				\arrow[from=2-5, to=2-6]
				\arrow[from=2-6, to=2-7]
				\arrow[from=3-2, to=3-3]
				\arrow[from=3-3, to=3-4]
				\arrow[from=3-4, to=3-5]
				\arrow[from=3-5, to=3-6]
				\arrow[curve={height=12pt}, from=3-6, to=2-7]
				\arrow["{[0,0]}"{description, pos=0.2}, draw=white, from=2-2, to=1-2]
				\arrow["{[0,0]}"{description, pos=0.1}, draw=white, from=2-4, to=1-4]
				\arrow["{[0,0]}"{description, pos=0.1}, draw=white, from=2-6, to=1-6]
				\arrow["{[100,100]}"{description, pos=0.2}, draw=white, from=3-2, to=2-2]
				\arrow["{[100,100]}"{description, pos=0.2}, draw=white, from=3-4, to=2-4]
				\arrow["{[100,100]}"{description, pos=0.2}, draw=white, from=3-6, to=2-6]
			\end{tikzcd}
		}
		Now, let the word we seek to align be $w = (a, 100)(a, 100)(a, 100)$. 
		
		The alignment the approach we described above would provide is $w' = (a, 0)(a, 0)(a, 0)$. 
		
		But if we assign even a tenth of the cost of action edits to timestamp edits, clearly the closer word in the language was $w'' = (b, 100)(a, 100)(a, 100)$. 
	\end{example}
	
	The issue here is the tradeoff between timestamp edits and action label edits. A better approach would be to assign a particular cost $c_A$ to action edits and another $c_T$ to time edits, and devise an algorithm to minimise their sum. A potential fix to our previous idea would be to essentially retain the above approach, but instead of just doing it for the best untimed alignment, take a selection of good untimed alignments (up to some allowable threshold), and try to align timestamps for each untimed candidate, and then minimise the total cost over this set. This alleviates some of the bias towards preserving action labels by allowing for action deviations in case the timestamp alignment proves particularly easy, but this is still just an initial approach to the wider problem, and we hope to find better methods in the future. 
	
	\section{Implementation}
	
	We have implemented the stamp only alignment algorithm in python, available at \url{https://github.com/NehaRino/TimedAlignments}. This algorithm is quadratic in time complexity, and runs well in practice, as seen below 
	\centerline{
		\begin{tabular}{|c|c|c|c|}
			\hline
			Trace Length & 10 & 100 & 1000\\
			\hline
			Running Time (seconds) & 0.001 & 0.1 & 9.8 \\ 
			\hline
		\end{tabular}
	}
	
	
	\section{Perspectives and Conclusion}
	
	In this paper, we posed the alignment problem for timed processes, devised three metrics with which to study alignments for timestamp sequences, and solved the purely timed alignment problem for the first two metrics proposed. As far as we know, this is the first step in conformance checking for time-aware process mining, and much further work can be inspired from this point. The alignment problem for the third metric $d_N$ is a first future direction. Secondly, for both metrics studied here the class of models for which the alignment problem was solved efficiently are structurally restricted (being linear causal processes for $d_t$ and extended free choice time Petri nets for $d_\theta$) and it would be interesting to see how to broaden the scope of these methods to larger classes of process models. Thirdly, further investigation in the general timed alignment problem is necessary, as our proposed approach here is rather rudimentary and can certainly be improved. Lastly, there are a number of other conformance artefacts that can be set and studied in the timed setting, such as anti-alignments \cite{CBC21}, and one can better develop all such conformance checking methods to account for timed process models.
	
	\bibliographystyle{IEEEtran}
	\bibliography{Conformanceieee}
	\clearpage
	\appendix 
	\subsection{Stamp Only Algorithm}
	We start off with a proof of Lemma \ref{manhattan}
	
	\begin{proof}
		The way we see this is as follows. 
		Given a causal process $CN = (B, E, G)$ and two timing functions (not necessarily constituting valid timings) $\tau_1, \tau_2$, let $$d_t(\tau_1, \tau_2) = \sum_{e \in E} |\tau_1(e) - \tau_2(e)|$$
		This is both achievable by stamp only moves by doing the corresponding $\tau_1(e) - \tau_2(e)$ stamp move to the second word, and is the least cost any such stamp only path can take as Manhattan distance obeys the triangle inequality, and every stamp move results in a word at exactly the same Manhattan distance from its predecessor as the cost of said stamp move. 
		
	\end{proof}
	
	Now, we can proceed to the proof of Theorem \ref{convexset} : 
	
	\begin{proof}
		Lemma 1 (General Form Lemma) of \cite{BM82} states that the firing domains of state classes for any bounded time Petri net may be expressed as solution sets of systems of inequalities of the following form :  $$\begin{cases}
			a_i \leq t(i) \leq b_i & \forall i \\ 
			t(j) - t(k) \leq c_{jk} & \forall j \neq k
		\end{cases}$$
		
		As $\gamma \in \mathcal{L}(N)$  this means it obeys the set of inequalities corresponding to the state class containing the final marking. Now, the above constraint is convex, so $\gamma$ varies over a convex set. As $\sigma$ is a constant vector, translation by it will keep the space convex, so $\gamma - \sigma \in \mathcal{C}$ for some convex set $\mathcal{C}$. 
		
		Now, we wish to minimize $|\gamma - \sigma|$, but it would be better to view this objective as a linear function. To this end we use a trick from \cite{BV2014} and use the following variable translation : $$\alpha, \beta \geq 0 : \alpha - \beta = \gamma - \sigma$$
		
		As defined above, $\alpha$ and $\beta$ are the absolute values of the nonnegative and negative components of the vector $\gamma - \sigma$, respectively, so their difference is $\gamma - \sigma$ and their sum is $|\gamma - \sigma|$. Clearly now, we have completed the transformation, and our alignment problem is equivalently cast as $$\underset{\alpha - \beta \in \mathcal{C}, \alpha, \beta \geq 0}{\mathsf{minimize}} \hspace{0.5em} \alpha + \beta  $$
		
	\end{proof}
	
	Before we prove Theorem \ref{StampAlgoCorr}, we prove a small, relevant lemma. 
	
	\begin{lemma}\label{minpwc}
		For any convex piecewise linear function $g$, the following function $g'$ is also convex and piecewise linear. 
		$$g'(d) = \min_{d-d' \in [a,b]} g(d')$$
		
	\end{lemma}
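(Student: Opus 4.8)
The plan is to first rewrite the constraint that defines $g'$ so that it becomes a minimization over a sliding interval in the single variable $d'$, then establish piecewise linearity by a short case analysis and convexity by a clean partial-minimization argument. To begin, I would note that $d - d' \in [a,b]$ is equivalent to $d' \in [d-b,\, d-a]$, so that
$$g'(d) = \min_{d' \in [d-b,\, d-a]} g(d').$$
Thus $g'(d)$ is the minimum of $g$ over a window of fixed width $b-a$ whose position slides linearly with $d$. Since $g$ is convex, its set of global minimizers is an interval $[m,m']$ (possibly a single point), with $g$ non-increasing to the left of $m$ and non-decreasing to the right of $m'$; I would also write $[l,h]$ for the $x$-projection of the domain of $g$.

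For piecewise linearity I would perform a three-way case analysis on the position of the window $[d-b,\,d-a]$ relative to $[m,m']$. If the window lies entirely to the left of the minimizers, i.e.\ $d-a \le m$ (and $d - a \ge l$, so $l+a \le d \le m+a$), then $g$ is non-increasing on the window and the minimum is attained at the right endpoint, giving $g'(d) = g(d-a)$. If the window straddles a minimizer, i.e.\ $m+a \le d \le m'+b$, then $g'(d) = g(m)$, the global minimum value. If the window lies entirely to the right, i.e.\ $m'+b \le d \le h+b$, then the minimum is at the left endpoint, $g'(d) = g(d-b)$. This reproduces exactly the three-region structure used by \textsc{graphMin}, and since in each regime $g'$ is either a constant or a horizontally translated copy of $g$, the function $g'$ is piecewise linear, its breakpoints being the shifted breakpoints of $g$ together with $m+a$ and $m'+b$.

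For convexity, rather than checking slope monotonicity across every breakpoint by hand, I would invoke the standard fact that the partial minimization of a jointly convex function is convex. Define $h(d,d') = g(d') + \iota_S(d,d')$, where $S = \{(d,d') : d-d' \in [a,b]\}$ and $\iota_S$ is the indicator function ($0$ on $S$, $+\infty$ off $S$). The set $S$ is convex, being the preimage of the interval $[a,b]$ under the linear map $(d,d') \mapsto d-d'$, so $\iota_S$ is convex; and $g(d')$ is convex in $(d,d')$, so $h$ is jointly convex. Since $g'(d) = \inf_{d'} h(d,d')$ is the infimum of a jointly convex function over one coordinate, $g'$ is convex. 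Alternatively, one may read convexity straight off the three cases: the slope of $g'$ is non-positive and non-decreasing on the first regime, zero on the second, and non-negative and non-decreasing on the third, so the slopes form a globally non-decreasing sequence.

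The main obstacle I anticipate is not convexity itself, which the partial-minimization argument settles immediately, but the bookkeeping of the domain and boundary conditions. I must confirm that the window $[d-b,\,d-a]$ is nonempty and genuinely intersects $[l,h]$ so that the minimum exists, and I must handle the degenerate situations in which $m=m'$, or in which the minimizing interval abuts the boundary of the domain, so that the piecewise formula is well defined on its stated domain $[l+a,\,h+b]$ and the three regions tile it without gaps or overlaps. These checks are routine, but they are exactly where an otherwise clean argument can hide an off-by-one or an empty-window error, so I would state them explicitly.
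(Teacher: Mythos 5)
Your proof is correct and its core — the three-way case analysis on the position of the sliding window $[d-b,\,d-a]$ relative to the minimizer interval $[m,m']$ of $g$, yielding $g(d-a)$ on $[l+a,\,m+a]$, the constant $g(m)$ on $[m+a,\,m'+b]$, and the left-endpoint value on $[m'+b,\,h+b]$ — is exactly the decomposition the paper uses. Two points of divergence are worth noting. First, your third-regime formula $g'(d)=g(d-b)$ is the correct one; the paper's proof (and the \textsc{graphMin} specification) writes $g(d+b)$ there, which is a sign slip, since the minimum of an increasing $g$ over $y\in[d-b,\,d-a]$ sits at the left endpoint $y=d-b$ (and $d+b$ would leave the domain of $g$ altogether). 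Second, the paper essentially stops after deriving the piecewise formula and asserts convexity as a consequence, whereas you supply an actual convexity argument — either reading off the monotone slope sequence across the three regimes, or, more robustly, observing that $g'$ is the partial minimization $\inf_{d'}\bigl(g(d')+\iota_S(d,d')\bigr)$ of a jointly convex function over the convex set $S=\{(d,d'): d-d'\in[a,b]\}$. The partial-minimization route is the more general of the two (it does not depend on $g$ being piecewise linear and would survive any convex $g$), while the case analysis is what the algorithm actually needs, since it tells you that $g'$ gains at most one extra segment. Your attention to the degenerate and boundary cases (empty window, $m=m'$, minimizers abutting $[l,h]$) is also more careful than the paper's treatment and is exactly where such an argument can silently break.
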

	
	\begin{proof}
		For any piecewise linear convex function $g$, the function $$g'(d) = \min_{d-d' \in [a,b]} g(d')$$ is also piecewise linear. 
		If the domain of $g$ was $[l, h]$, the domain of $g'$ is $[l + a, h + b]$. 
		
		Say the leftmost minimum of $g$ (if $g$ has multiple minima they are all on the same flat line segment) is at $m \in [l,h]$. 
		For all $x \in [l + a, m+ a] $ and $x - y \in [a, b]$ and $y \in [l,h]$ implies $y \in [x-b+a, x-a] \cap [l, h] \subseteq [l, h] \cap [l-b+a m] = [l,m]$ meaning for each $x$, in the possible domain of $y$ $g(y)$ is a strictly decreasing function, hence  we have $$\forall x \in [l + a, m+ a] : \min_{x - y \in [a,b]} g(y) = g(x - a) $$
		
		Now, let the rightmost minimum of $g$ is $m'$. 
		For all $x \in [m + a, m' + b] $ and $x - y \in [a, b]$ and $y \in [l,h]$ implies $y \in [x-b, x-a] \cap [l, h]$, and $[x - b, x - a] \cap [m, m'] \neq \emptyset$. 
		Now, this means for each $x \in [m+a, m' + b]$ the domain of $y$ has a minimum of $g$, so we have $$\forall x \in [m + a, m'+ b] : \min_{x - y \in [a,b]} g(y) = g(m) $$
		
		Lastly, let  $x \in [m' + b, h + b] $ and $x - y \in [a, b]$ and $y \in [l,h]$. 
		This implies $y \in [x-b, x-a] \cap [l, h] = [m', h]$, so for each $x \in [m+a, m' + b]$ the domain of $y$ is in a region where $g$ is strictly increasing, so we have $$\forall x \in [m + a, m'+ b] : \min_{x - y \in [a,b]} g(y) = g(x+b) $$
		
		So we see that for any convex piece-wise linear function $g$, $$\min_{d-d' \in [a,b]} g(d') = \begin{cases}
			g(d-a) & l + a \leq d < m + a \\
			g(m) & m+a \leq d \leq m' + b \\
			g(d+b) & m' + b < d \leq h + b   \\
		\end{cases}$$
		
		Hence $g'$ is also convex and piecewise linear, and has at most one more linear segment than $g$. 
	\end{proof}
	
	We include here pseudocode for the \textsc{backTrack} subfunction : 
	
	\begin{figure}[!t]
		\removelatexerror
		\begin{algorithm}[H]
			\caption{Backtrack to find the best word}
			\label{AuxFunc3}
			\begin{algorithmic}
				\State \textbf{Input : }  $\sigma$, $graphlist$, $cost$, $SI$
				\State \textbf{Output : }  $\gamma \in \mathcal{L}(N)$ such that $d_t(\gamma, \sigma) = cost$
				
				\For{$i = 0, i < |graphlist|$}
				\State Find $ylist[i]$, the $y$ values of $graphlist[i]$
				\EndFor
				
				\State $i \gets |sigma|-1$
				\While{$i \geq 0$}
				\State $[a, b] = SI[i+1]$ 
				\State Truncate $ylist[i]$ and $graphlist[i]$ so $ \gamma_{i+1} - x \in [a, b]$. 
				\State $k \gets $ \textsc{binarySearch}($ylist[i], cost$)
				\State $xsegment \gets graphlist[i][k]$
				
				\State $\gamma_i \gets xsegment^{-1}(cost)$
				
				\State $cost \gets cost - |\gamma_i - \sigma_i|$
				\State $i \gets i-1$
				\EndWhile
				
			\end{algorithmic}
		\end{algorithm}
	\end{figure}
	
	We may now prove Theorem \ref{StampAlgoCorr} as follows : 
	
	\begin{proof}
		We are given a process model $N$, an observed trace $\sigma$, and its underlying linear causal process $CN = (B, E, G)$ with homomorphism $p$ for the untimed run of $\sigma$ on $N$.
		As discussed earlier, for linear process models, we define $$g_i(d) = \min_{\tau|_i \in \mathcal{L}^i(N) \wedge \tau_i = d} cost(\tau|_i)$$
		
		This family of functions is seen to obey the following recursion: 
		
		$$g_{i+1}(d) = |d - \sigma_{i+1}| + \min_{d-d' \in [a,b]} g_i(d')$$
		
		If we can calculate the family of functions $\forall i \leq n : g_i$ efficiently, then the minimal cost for aligning $\sigma$ to $N$ is simply $$\min_{d} g_n(d)$$
		
		Hence to prove the correctness of Theorem \ref{StampAlgoCorr} we only need show that the auxiliary subfunctions called do indeed do their job correctly, as the main function just implements the recursion. 
		
		To this end, we first notice that the family of functions $g_i(d)$ are convex and piecewise linear. 
		This can be seen through induction : 
		
		Firstly $g_1(d) = |d - \sigma_1|$ so the base case is covered. 
		
		Now suppose $g_{i-1}$ is known to be convex and piecewise linear for some $i \geq 2$. 
		By Lemma \ref{minpwc} we see that $\min_{d-d' \in [a_i,b_i]} g_{i-1}(d')$  is also convex and piecewise linear, and so is $|d - \sigma_i|$ and both these properties are preserved by addition, so $g_i(d)$ must also be convex and piecewise linear. 
		
		Now that we have this result, our representation of the graph as a sequence of line segments, each represented as a triple (left x endpoint, slope, right x endpoint) is justified. 
		The y value of any graph $g_i$ can be back calculated because we know the value of the cost of the leftmost endpoint, it represents the word obtained by firing every transition at the earliest possible time. 
		
		Now subfunction \textsc{graphAddMod} is adding the point $\sigma_i$ if it is inside the domain of the function, and changing the slopes of the segments before $\sigma_i$ by subtracting one, and of those after $\sigma_i$ by adding one, which has the effect of adding the modulus component $|d - \sigma_i|$ to the previous graph. 
		
		Subfunction \textsc{graphMin} is even simpler, as Lemma \ref{minpwc} suggests all it does is translate the whole graph forward by $a$, and then translate the strictly increasing portion from $[m', h]$ by an addition $b-a$, adding an extra flat segment of length $b-a$ in between to keep the two portions of the function connected, and so on input $g(d), [a, b]$ it outputs $\underset{d-d' \in [a,b]}{\min}g(d')$. 
		
		Subfunction \textsc{backTrack} takes the minimum cost obtained by the calculation of the $g_n$ function, finds the value of $d$, i.e. $\gamma_n$ for which the minimum is achieved, and subtracts the cost aligning only the last place incurs, thereby finding the minimum cost for aligning the $n-1$ length prefix. 
		It then proceeds to do the same iteratively for each prefix, reverse-engineering a trace $\gamma$ for which the total minimum cost of aligning to $\sigma$ is achieved. 
		
	\end{proof}
	
	\subsection{Preliminaries for Delay Only Algorithm}

	Now, in order to study \textit{timed} executions, we want to be able add a timing function to the causal process we defined above, thereby allowing us to record when different transitions are taken. 
	The following definitions, properties and theorems in this subsection cover the material developed in Aura and Lilius' article on Time Processes \cite{AL97}. 
	
	Given the definition of the timing function, we try to see how a time process unfolds. 
	First the initial events, that is, those enabled at $Min(CN)$ happen, and as each event occurs, new events are enabled and disabled. 
	In order to keep track of this, we first define the auxiliary Cut function that represents the effect of having all the events in a subset fire simultaneously (we only use Cut on sets of events that can concurrently fire), as defined below. 
	$$Cut(E') = (E' \pre  \cup Min(NS)) \setminus \pre  E'.$$
	
	Now we can define the  \textit{time of enabling} for a transition $t$ of a time Petri net in a set of conditions $B'$ of its causal process as $TOE(B', t) = $ $$ max(\{\tau(\pre b) | b \in B' \setminus Min(CN) \wedge p(b) \in \pre  t\} \cup \{0\})$$
	
	We must now verify that this timing function does indeed represent a valid execution that obeys all the static interval constraints, as below : 
	
	\begin{definition}[Valid Timing] \label{validt} A timing function $\tau$ is a \textit{valid timing} of the causal process iff 
		$$\forall e \in E : \tau(e) \geq TOE(\pre e, p(e) ) + Eft(p(e))$$
		$$\forall e \in E : \forall t \in Enabled (p(C_e)) : \tau(e) \leq TOE(C_e, t) + Lft(t)$$ 
		where $C_e = Cut(Earlier(e))$. 
	\end{definition}
	
	Here, checking every element of $C_e$ might seem unnecessary but in certain nets, complex dependencies between transitions can cause causally unrelated transitions to force transitions to fire or be disabled, due to urgency. 
	This phenomenon is known as confusion, and hence to guard against this, a timing function must check that it never leaves a section of the process behind, completing the firing or disabling of every event in $Earlier(e)$ before it can reason about the fireability of $e$. 
	
	\begin{definition}[Time Process] \label{tp} A \textit{time process} of a time Petri net $N$ is a triple $(CN, p, \tau)$ where $\tau$ is a valid timing of $(CN, p)$ which is a causal process of $N$. 
	\end{definition}
	
	A few last notions here will assist us when we consider a simpler class of time Petri nets, that allow for much easier validity checking. 
	
	\begin{definition}[Extended Free Choice]\label{efc} A time Petri net is \textit{extended free choice} iff for all two transitions $t$ and $t'$, $\pre t \cap \pre t' \neq \emptyset$ implies $\pre t = \pre t'.$
	\end{definition}
	
	This class of time Petri nets ensure that the net is \textit{confusion-free}, that is, causally unrelated events cannot affect the fireability of other events. 
	This means that in order to check for the validity of a timing function, checking all of $C_e$ as before is no longer necessary. 
	
	How do we construct a valid timing function on the fly? 
	Given a partial time process of a TPN, we want to be able to study the effect of firing particular transitions amongst the set of transitions currently enabled. 
	In order to do so, we notice the following class of transitions. 
	
	A transition $t$ is a \textit{choice} at $B' \subseteq B$ iff $B'$ is a co-set that maps injectively to places and $\pre t = p(B')$, and a choice is an \textit{extension} of the process iff $B' \subseteq Cut(E)$. 
	
	These transitions reflect exactly the transitions enabled at a particular moment in the evolution of the process. Either they must be fired or disabled, by the urgency condition. 
	
	Given the above, Aura and Lilius characterise a method by which one can build partial time processes inductively, building forward while maintaining the validity of the timing function, denoted as keeping the process complete with respect to the timing function : 
	
	\begin{definition} A causal process $(CN, p)$ of a time Petri Net where $CN = (B, E, G)$ is said to be \textit{complete} with respect to a timing function $\tau$ iff for every extension transition $t$ of the process, $$\max\{\tau(e)| e \in E\} \leq TOE(Cut(E), t) + Lft(t).$$
	\end{definition}
	
	Now given this, the following theorem due to Aura and Lilius \cite{AL97} holds : 
	
	\begin{theorem}\label{thm35}
		Let $N$ be an extended free choice time Petri net and $(CN, p)$ a causal process of $N$, where $CN = (B, E, G)$. A timing function $\tau$ is valid iff the following criteria hold : 
		
		\begin{enumerate}
			\item$ \forall e \in E :  Eft(p(e)) \leq \tau(e) - TOE(\pre e, p(e))$
			\item $\tau(e) - TOE(\pre e, p(e)) \leq  \min \{Lft(t) | \pre t = \pre p(e)\}$
			\item $(CN, p)$ is complete with respect to $\tau$, i.e, for every extension transition $t$ of the process, $$\max\{\tau(e)| e \in E\} \leq TOE(Cut(E), t) + Lft(t)$$
			
		\end{enumerate}
	\end{theorem}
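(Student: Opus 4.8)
The plan is to show that, under the extended free choice hypothesis, Definition~\ref{validt} decomposes exactly into the three listed criteria. Write (V1) and (V2) for the two clauses of Definition~\ref{validt} (the $Eft$ lower bound and the urgency upper bound, respectively). Clause (1) of the theorem is literally (V1) rearranged, so the whole content is the equivalence $(V2)\Longleftrightarrow(2)\wedge(3)$. The engine driving this equivalence is the extended free choice dichotomy: for the transition $p(e)$ and any transition $t$ with $\pre t \cap \pre p(e) \neq \emptyset$ we must have $\pre t = \pre p(e)$ (Definition~\ref{efc}), so every $t$ enabled at the cut $C_e = Cut(Earlier(e))$ is either in direct conflict with $p(e)$ (shares its whole preset) or is concurrent with it (disjoint preset). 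I would repeatedly use two bookkeeping facts that follow from working inside a causal net: if $\pre t = \pre p(e)$ then the conditions of $C_e$ feeding $t$ are exactly $\pre e$, whence $TOE(C_e, t) = TOE(\pre e, p(e))$; and a condition, once produced, persists in every later cut until its unique consuming event fires, so $TOE$ is insensitive to which later cut we read it from.

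First the forward direction, assuming $\tau$ valid. Criterion (1) is immediate. For (2), fix $e$ and instantiate (V2) at each $t$ with $\pre t = \pre p(e)$; such $t$ lies in $Enabled(p(C_e))$ because $e$ itself is fireable there, and the $TOE$ identity above turns (V2) into $\tau(e) - TOE(\pre e, p(e)) \le Lft(t)$; taking the minimum over the conflict set yields (2). For (3), let $t$ be any extension transition and let $e^\ast$ realise $\max_{e}\tau(e)$. Since $t$ is enabled at the final cut and never fired nor disabled, its preset conditions are already present at $C_{e^\ast}$ and survive unchanged to $Cut(E)$, so $t \in Enabled(p(C_{e^\ast}))$ and $TOE(C_{e^\ast}, t) = TOE(Cut(E), t)$; applying (V2) at $(e^\ast, t)$ gives exactly $\max_e \tau(e) \le TOE(Cut(E), t) + Lft(t)$.

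For the reverse direction, assume (1)--(3) and verify (V2) for an arbitrary $e$ and $t \in Enabled(p(C_e))$, splitting on the dichotomy. If $\pre t = \pre p(e)$, then (2) together with $TOE(C_e,t) = TOE(\pre e, p(e))$ gives the bound at once. If the presets are disjoint, $t$ is concurrent with $e$ and there are two subcases. If $t$ is never consumed it is an extension transition, so (3) combined with $\tau(e) \le \max_{e'}\tau(e')$ and $TOE(C_e,t) = TOE(Cut(E),t)$ delivers (V2). If $t$ is consumed, let $e'$ be the event firing $t$ or a transition in conflict with it; since $t$ is still enabled at $C_e$ we have $e' \notin Earlier(e)$, hence $\tau(e) \le \tau(e')$, while extended free choice forces $\pre p(e') = \pre t$ and therefore $TOE(\pre e', p(e')) = TOE(C_e, t)$. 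Condition (2) applied to $e'$ then yields $\tau(e') \le TOE(\pre e', p(e')) + Lft(t)$, and chaining $\tau(e) \le \tau(e') \le TOE(C_e, t) + Lft(t)$ closes the case.

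I expect the last subcase -- a concurrent transition consumed strictly after $e$ -- to be the main obstacle, since it is the only place where the urgency bound for $e$ is certified indirectly, through the conflict-set deadline of a later event rather than through $e$ itself. Making this rigorous requires pinning down the definition of $Earlier(e)$ precisely enough to guarantee $\tau(e)\le\tau(e')$ (and to handle simultaneous firings), and it is exactly here that confusion-freeness is indispensable: it is the extended free choice property that collapses the unrestricted quantifier ``$\forall t \in Enabled(p(C_e))$'' of (V2) -- which in a general net could couple causally unrelated transitions -- into the two local clauses (2) and (3). Everything else is routine rearrangement once the two $TOE$ identities are established.
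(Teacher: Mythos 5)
This theorem is not proved in the paper at all: it is imported verbatim from Aura and Lilius \cite{AL97} (the text introduces it with ``the following theorem due to Aura and Lilius holds''), so there is no in-paper proof to compare yours against. Judged on its own merits, your reconstruction follows what is essentially the standard argument and its skeleton is sound: clause (1) is indeed just the $Eft$ half of Definition~\ref{validt}, and the real content is splitting the urgency quantifier ``$\forall t \in Enabled(p(C_e))$'' into the conflict cluster of $p(e)$ (clause (2)) and everything else (clause (3)), using the extended free choice dichotomy to guarantee that any other enabled transition has preset disjoint from $\pre p(e)$ and is therefore either an extension transition or is consumed by a later event whose conflict-set deadline certifies the bound indirectly. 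Your identification of $TOE(C_e,t)$ with $TOE(\pre e, p(e))$ when $\pre t = \pre p(e)$, and the persistence of conditions across later cuts, are the right bookkeeping lemmas.

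Two small points deserve tightening. First, in the forward direction for (3) you assert that an extension transition's preset conditions are ``already present at $C_{e^\ast}$''; this can fail if some of those conditions are produced by $e^\ast$ itself or by events firing at the same instant, but in that case $TOE(Cut(E),t) \ge \tau(e^\ast)$ and the inequality holds trivially, so the case split should be made explicit rather than elided. Second, the step $\tau(e) \le \tau(e')$ for the consuming event $e'$ rests entirely on the definition of $Earlier(e)$ as the set of events with strictly smaller timestamps (plus a convention for ties), which the paper never spells out; you correctly flag this as the load-bearing subtlety, and it is exactly where the original proof in \cite{AL97} does the corresponding work. With those two repairs your argument would stand as a legitimate self-contained proof of the cited result.
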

	
	\subsection{Proof of correctness of Delay Only Algorithm}
	We start off with a quick proof for Lemma \ref{manhattan2} : 
	
	\begin{proof}
		By noting that delay moves on a trace $\tau$ translate exactly to stamp moves on $f_\tau$, we can deduce by the same argument as Lemma \ref{manhattan} that the distance function $d_\theta$ can be seen to be identical to the Manhattan distance on flow functions, i.e, the above formulation. 
	\end{proof}

	\begin{proof}
		By theorem \ref{thm35} proved in Aura and Lilius' article \cite{AL97}, we know that in order to ensure that this is a valid time process, we need only ensure three inequalities hold as the time process $\gamma$ evolves. 
		
		\begin{enumerate}
			\item$ \forall e \in E :  Eft(p(e)) \leq \gamma(e) - TOE(\pre e, p(e))$
			
			By definition $f_\gamma(e) = \gamma(e) - TOE(\pre e, p(e))$, and by line 6 of the algorithm the variable $eft$ is correctly initialised to store $Eft(p(e))$, and line 17 ensures the above inequality holds. 
			
			\item $\gamma(e) - TOE(\pre e, p(e)) \leq  \min \{Lft(t) | \pre t = \pre p(e)\}$
			
			In similar vein, line 6 also ensures that $l$ is initialised precisely to store $\min \{Lft(t) | \pre t = \pre p(e)\}$, and line 17 again enforces the left hand side to be less than $l$. 
			
			\item $(CN, p)$ is complete with respect to $\gamma$, i.e, for every extension transition $t$ of the process, $$\max\{\gamma(e)| e \in E\} \leq TOE(Cut(E), t) + Lft(t)$$
			
			At every point when assigning a timestamp to an event, it is ensured that it belongs to $Soon$, a set defined to contain only those enabled events that have the least value of $l + toe$, or, $TOE(Cut(E), p(e)) + Lft(p(e))$. 
			This ensures that whenever an event $e$ is assigned a timestamp $\gamma(e)$, for all extension transitions $t$  (which must either be causal descendents of $e$ or already enabled when $e$ was picked, thereby being in $Enabled$, by extended free choice) we know that $\gamma(e) \leq TOE(Cut(E), p(e)) + Lft(p(e)) \leq TOE(Cut(E), t) + Lft(t)$. 
			
			Hence, $(CN, p, \gamma)$ is a valid time process of $N$. 
			
		\end{enumerate}
		
		As for its optimality, we see that any other valid time process would have to obey the same inequalities, in particular for another valid timing function $\gamma'$ it would have to obey $$f_{\gamma'}(e) \in [Eft(p(e)), \min_{\pre t' = \pre p(e)}{Lft(t')}] = [a,b]$$ and hence at any event $e$ where $\gamma$ differs from $\gamma'$ it would have lower than or equal cost for that event as $$f_\gamma(e) = \argmin_{x \in [a, b]} |f_\sigma(e) - x|$$
		
		Hence, the algorithm is correct. 
	\end{proof}
	
	\subsection{A Note on Linear Causal Processes}
	We take a moment to discuss the structurally restricted class of processes that our stamp only-algorithm works for, that is, causal processes whose graphical structure  is that of a straight line. 
	These reflect the executions of time Petri nets that do not have any branching points ($e \in E : |e \pre | > 1$) and naturally as a consequence, no points of synchrony ($e \in E : |\pre e| > 1$). 
	This essentially reflects the executions of automata, allowing for exclusive branching ($p \in P : |p \pre| > 1$) and iteration (that is, cycles in the model) only. 
	It loses the ability to capture properties of a concurrent nature, as the run never splits into more than one token. 
	On the other hand, studying the language of this restricted class and the alignment problem over it becomes substantially simpler, the $G$ pre-order becomes total, and so to begin with, this is a significantly more tractable class of time processes. 
	
	\begin{example}\label{Branchingcp}
		Going back to the net $N$ in example \ref{TPN}, and the firing sequence we analysed then $$w = (a, 1)(b, 2)(d, 3)(e, 4)(f,5)$$ 
		
		We now construct the causal process of this execution, and see that it is indeed branching, as the very first transition, $a$, itself has two post-places, which violates the no branching condition $e \in E : |e \pre | \leq  1$. 
		\adjustbox{scale = 0.8}{
			\begin{tikzcd}
				\bigcirc & {\fbox{a}} & \bigcirc & {\fbox{b}} & \bigcirc & {\fbox{f}} & \bigcirc \\
				&& {\fbox{d}} & \bigcirc & {\fbox{e}} & \bigcirc
				\arrow[from=1-1, to=1-2]
				\arrow[from=1-2, to=1-3]
				\arrow[from=1-3, to=2-3]
				\arrow[from=2-3, to=2-4]
				\arrow[from=1-3, to=1-4]
				\arrow[from=1-4, to=1-5]
				\arrow[from=1-5, to=1-6]
				\arrow[from=2-4, to=2-5]
				\arrow[from=2-5, to=2-6]
				\arrow[from=2-6, to=1-6]
				\arrow[from=1-6, to=1-7]
		\end{tikzcd}}
	\end{example}
	
	\begin{example}\label{Linearcp}
		On the other hand, consider the following time Petri net $N_1$ : \\
		\adjustbox{scale = 0.8}{\begin{tikzcd}
				{} & {\fbox{a}} & {\fbox{e}} & \bigcirc & {\fbox{a}} & \bigcirc & {\fbox{a}} & \bigcirc \\
				& \bigcirc & {\fbox{c}} & \bigcirc & \bigcirc & {\fbox{c}} & \bigcirc & {\fbox{b}} \\
				{} & {\fbox{b}} & \bigcirc & {\fbox{d}} & {\fbox{e}} & \bigcirc & {\fbox{a}} & \bigcirc
				\arrow[from=2-2, to=1-2]
				\arrow[curve={height=12pt}, from=1-2, to=2-2]
				\arrow[from=2-2, to=3-2]
				\arrow[from=3-2, to=3-3]
				\arrow[from=3-3, to=2-3]
				\arrow[from=3-3, to=3-4]
				\arrow[from=3-4, to=2-4]
				\arrow[from=2-4, to=1-3]
				\arrow[from=1-3, to=2-2]
				\arrow[from=2-3, to=2-4]
				\arrow["{[0,1]}"{description, pos=0.3}, draw=white, from=1-2, to=1-1]
				\arrow["{[2,2]}"{description, pos=0.3}, draw=white, from=3-2, to=3-1]
				\arrow["{[2,3]}"{description, pos=0.2}, draw=white, from=2-3, to=1-3]
				\arrow[from=1-4, to=1-5]
				\arrow[from=1-5, to=1-6]
				\arrow[from=1-6, to=1-7]
				\arrow[from=1-7, to=1-8]
				\arrow[from=1-8, to=2-8]
				\arrow[from=2-8, to=2-7]
				\arrow[from=2-7, to=2-6]
				\arrow[from=2-6, to=2-5]
				\arrow[from=2-5, to=3-5]
				\arrow[from=3-5, to=3-6]
				\arrow[from=3-6, to=3-7]
				\arrow[from=3-7, to=3-8]
				\arrow["{[2,2]}"{description, pos=0.3}, draw=white, from=1-3, to=1-4]
				\arrow["{[0,0]}"{description, pos=0.3}, draw=white, from=3-4, to=3-5]
		\end{tikzcd}}

		Now, as all the branching and joining happens at places rather than transitions, its executions all have linear causal processes, such as the above causal process for the untimed word $w = aabcea$. 
	
		This causal process is unrolled from the Petri net, exactly the way runs of finite state automata are simple paths over the graph of the automaton. We throughout assume that wherever needed, such a causal process can be obtained efficiently. 
	\end{example}
	
\end{document}